\DeclareSymbolFont{extraup}{U}{zavm}{m}{n}
\DeclareMathSymbol{\vardiamond}{\mathalpha}{extraup}{87}
\newcommand{\nomi}{\mathbf{i}}
\newcommand{\nomj}{\mathbf{j}}
\newcommand{\nomk}{\mathbf{k}}
\renewcommand{\phi}{\varphi}
\renewcommand{\epsilon}{\varepsilon}
\theoremstyle{definition}
\newtheorem{theorem}{Theorem}[section]
\newtheorem{lemma}[theorem]{Lemma}
\newtheorem{proposition}[theorem]{Proposition}
\newtheorem{example}[theorem]{Example}
\newtheorem{corollary}[theorem]{Corollary}
\newtheorem{definition}[theorem]{Definition}
\newtheorem{remark}[theorem]{Remark}
\title{Sahlqvist-Type Completeness Theory for Hybrid Logic with Binder}
\author{Zhiguang Zhao}
\affil{\small School of Mathematics and Statistics, Taishan University, Tai'an, 271000, China}
\affil{\small zhaozhiguang23@gmail.com}
\date{}
\begin{document}
\maketitle
\begin{abstract}
In the present paper, we continue the research in \cite{Zh21c} to develop the Sahlqvist-type completeness theory for hybrid logic with satisfaction operators and downarrow binders $\mathcal{L}(@, \downarrow)$. We define the class of skeletal Sahlqvist formulas for $\mathcal{L}(@, \downarrow)$ following the ideas in \cite{ConRob}, but we follow a different proof strategy which is purely proof-theoretic, namely showing that for every skeletal Sahlqvist formula $\phi$ and its hybrid pure correspondence $\pi$, $\mathbf{K}_{\mathcal{H}(@, \downarrow)}+\phi$ proves $\pi$, therefore $\mathbf{K}_{\mathcal{H}(@, \downarrow)}+\phi$ is complete with respect to the class of frames defined by $\pi$, using a restricted version of the algorithm $\mathsf{ALBA}^{\downarrow}$ defined in \cite{Zh21c}. 

\emph{Keywords}: completeness theory, Hilbert system, hybrid logic with binder, ALBA algorithm
\end{abstract}

\section{Introduction}

\paragraph{Hybrid logic} Hybrid logics \cite{BeBlWo06} have higher expressivity than modal logics where it is possible to talk about states in the model using \emph{nominals} that are true at exactly one state. There are also other connectives in hybrid logic which are used to increase the expressive power, e.g.\ the \emph{satisfaction operator} $@_{\mathbf{i}}\phi$ which intuitively reads ``at the world denoted by $\nomi$, $\phi$ is true'', and the \emph{downarrow binder} $\downarrow x.\phi$ which binds the current world and can refer to the world later in $\phi$. In the present paper, we use $\mathcal{L}$ to denote the language for hybrid logic with nominals, $\mathcal{L}(@)$ with nominals and satisfaction operators, $\mathcal{L}(@,\downarrow)$ with nominals, satisfaction operators and downarrow binders, and $\mathbf{K}_{\mathcal{H}}$, $\mathbf{K}_{\mathcal{H}(@)}$, $\mathbf{K}_{\mathcal{H}(@,\downarrow)}$ to denote their respective basic systems.

\paragraph{Correspondence theory} Correspondence theory started as a branch of the model theory of modal logic. We say that a modal formula $\phi$ corresponds to a first-order formula $\alpha$ if they are valid on exactly the same class of Kripke frames. Sahlqvist \cite{Sa75} and van Benthem \cite{vB83} gave a syntactic description of certain modal formulas (later called \emph{Sahlqvist formulas}) which have two nice properties: first of all, they have first-order correspondents, secondly, they axiomatize normal modal logics strongly complete with respect to the class of Kripke frames defined by them. 

\paragraph{Correspondence and completeness theory for hybrid logic} Existing literature on correspondence and completeness theory for hybrid logic is abundant, see \cite{BetCMaVi04,Co09,
CoGoVa06b,ConRob,GaGo93,GoVa01,Ho06,HoPa10,Ta05,tCMaVi06,Zh21c}. Gargov and Goranko proved that any extension of $\mathbf{K}_{\mathcal{H}}$ with pure axioms (formulas that contain nominals only but no propositional variables) is strongly complete. ten Cate and Blackburn \cite{tCBl06} showed that any pure extension of $\mathbf{K}_{\mathcal{H}(@)}$ and $\mathbf{K}_{\mathcal{H}(@,\downarrow)}$ are strongly complete. ten Cate, Marx and Viana \cite{tCMaVi06} proved that any extension of $\mathbf{K}_{\mathcal{H}(@)}$ with modal Sahlqvist formulas (with no nominals and with propositional variables only) is strongly complete, and that these two kinds of results cannot be combined in general, since there is a pure formula and a modal Sahlqvist formula which together axiomatize a Kripke-incomplete logic when added to $\mathbf{K}_{\mathcal{H}(@)}$. Conradie and Robinson \cite{ConRob} studied to what extent can these two results be combined in $\mathcal{L}(@)$, using algorithmic and algebraic method. Zhao \cite{Zh21c} studies the correspondence theory for $\mathcal{L}(@,\downarrow)$. 

\paragraph{Our contribution} The present paper continues the study in \cite{Zh21c} on the completeness theory in the spirit of \cite{ConRob}, using algorithmic method, which is based on the algorithm $\mathsf{ALBA}$ (Ackermann Lemma Based Algorithm) \cite{ConPalSou,CoGhPa14}, which computes the first-order correspondents of input formulas/inequalities and is guaranteed to succeed on Sahlqvist formulas/inequalities. However, our completeness proof follows a different strategy, which is not algebraic as in \cite{ConRob}, but purely proof-theoretic. We define the class of skeletal Sahlqvist formulas (which is a subclass of Sahlqvist formulas defined in \cite{Zh21c}) for $\mathcal{L}(@, \downarrow)$ following the ideas in \cite{ConRob}, show that for every skeletal Sahlqvist formula $\phi$ and its hybrid pure correspondence $\pi$, $\mathbf{K}_{\mathcal{H}(@, \downarrow)}+\phi$ proves $\pi$, therefore $\mathbf{K}_{\mathcal{H}(@, \downarrow)}+\phi$ is complete with respect to the class of frames defined by $\pi$, using a restricted version of the algorithm $\mathsf{ALBA}^{\downarrow}$ defined in \cite{Zh21c}. 

\paragraph{Structure of the paper} The structure of the paper is as follows: Section \ref{Sec:Prelim} presents preliminaries on hybrid logic with satisfaction operators and downarrow binders, including syntax, semantics and basic system $\mathbf{K}_{\mathcal{H}(@,\downarrow)}$. Section \ref{Sec:Prelim:ALBA} provides ingredients on algorithmic correspondence theory. Section \ref{Sec:Sahl} defines skeletal Sahlqvist inequalities. Section \ref{Sec:ALBA} gives the Ackermann Lemma Based Algorithm $\mathsf{ALBA}^{\downarrow}_{\mathsf{Restrict}}$ for $\mathcal{L}(@,\downarrow)$. Section \ref{Sec:Success} briefly sketch the proof that $\mathsf{ALBA}^{\downarrow}_{\mathsf{Restrict}}$ succeeds on skeletal Sahlqvist formulas. Section \ref{Sec:Completeness} proves that $\mathbf{K}_{\mathcal{H}(@, \downarrow)}$ extended with skeletal Sahlqvist formulas are strongly complete. Section \ref{Sec:Conclusion} gives conclusions.

\section{Preliminaries on hybrid logic with binder}\label{Sec:Prelim}

In the present section we collect the preliminaries on hybrid logic with binder. For more details, see \cite[Chapter 14]{BeBlWo06} and \cite{tC05}.

\subsection{Language and syntax}\label{Subsec:Lan:Syn}

\begin{definition}

Given three pariwise disjoint countably infinite sets $\mathsf{Prop}$ of propositional variables, $\mathsf{Svar}$ of state variables, $\mathsf{Nom}$ of nominals, the hybrid language $\mathcal{L}(@,\downarrow)$ is defined as follows:
$$\varphi::=p \mid x \mid \mathbf{i} \mid \bot \mid \top \mid \neg\varphi \mid \varphi\lor\varphi \mid \varphi\land\varphi \mid \varphi\to\varphi \mid \Diamond\varphi \mid \Box\varphi \mid @_{x}\varphi\mid @_{\mathbf{i}}\varphi \mid \downarrow x.\varphi,$$
where $p\in \mathsf{Prop}$, $x\in\mathsf{Svar}$, $\mathbf{i}\in\mathsf{Nom}$. 

We use $\vec p$ to denote a set of propositional variables and $\phi(\vec p)$ to indicate that the propositional variables that occur in $\phi$ are all in $\vec p$. We use $\mathsf{Prop}(\phi)$ to denote the set of all propositional variables occurring in $\phi$. We say that a formula is \emph{pure} if it contains no propositional variables. We define free and bound occurrences of state variables as usual, and say that a hybrid formula is a \emph{sentence} if it contains no free occurrences of state variables. We define $\sigma$ to be a \emph{substitution} that uniformly replaces propositional variables by formulas and terms (nominals or state variables) by terms. We use $\phi[\theta/p]$ to denote the substitution replacing $p$ by $\theta$ uniformly. We also use $\phi[\gamma/\delta]$ to denote the replacement of some occurrences of $\delta$ in $\phi$ by $\gamma$. In the present paper we will only consider the language with one unary modality.
\end{definition}

In the article, we will use \emph{inequalities} of the form $\phi\leq\psi$, where $\phi$ and $\psi$ are formulas, and \emph{quasi-inequalities} of the form $\phi_1\leq\psi_1\ \&\ \ldots\ \&\ \phi_n\leq\psi_n\ \Rightarrow\ \phi\leq\psi$. We will find it easy to work with inequalities $\phi\leq\psi$ in place of implicative formulas $\phi\to\psi$ in Section \ref{Sec:Sahl}.

\subsection{Semantics}\label{Subsec:Seman}

\begin{definition}
A \emph{Kripke frame} is a pair $\mathbb{F}=(W,R)$ where $W$ is a non-empty set called the \emph{domain} of $\mathbb{F}$, $R$ is a binary relation on $W$ called the \emph{accessibility relation}. A \emph{pointed Kripke frame} is a pair $(\mathbb{F}, w)$ where $w\in W$. A \emph{Kripke model} is a pair $\mathbb{M}=(\mathbb{F}, V)$ such that $V:\mathsf{Prop}\cup\mathsf{Nom}\to P(W)$ is a \emph{valuation} on $\mathbb{F}$ where for all nominals $\nomi\in\mathsf{Nom}$, $V(\nomi)\subseteq W$ is a singleton.

An assignment $g$ on $\mathbb{M}=(W,R,V)$ is a map $g:\mathsf{Svar}\to W$. For any assignment $g$,  any $x\in\mathsf{Svar}$, any $w\in W$, we define $g^{x}_{w}$ (the \emph{$x$-variant of $g$}) as follows: $g^{x}_{w}(x)=w$ and $g^{x}_{w}(y)=g(y)$ for all $y\in\mathsf{Svar}\setminus\{x\}$.

Now the satisfaction relation is given as follows: for any Kripke model $\mathbb{M}=(W,R,V)$, assignment $g$ on $\mathbb{M}$, $w\in W$, 

\begin{center}
\begin{tabular}{l c l}
$\mathbb{M},g,w\Vdash p$ & iff & $w\in V(p)$;\\
$\mathbb{M},g,w\Vdash x$ & iff & $g(x)=w$;\\
$\mathbb{M},g,w\Vdash\nomi$ & iff & $\{w\}=V(\nomi)$;\\
$\mathbb{M},g,w\Vdash \bot$ & : & never;\\
$\mathbb{M},g,w\Vdash \top$ & : & always;\\
$\mathbb{M},g,w\Vdash \neg\varphi$ & iff & $\mathbb{M},g,w\nVdash\varphi$;\\
$\mathbb{M},g,w\Vdash\varphi\lor\psi$ & iff & $\mathbb{M},g,w\Vdash \varphi$ or $\mathbb{M},g,w\Vdash\psi$;\\
$\mathbb{M},g,w\Vdash\varphi\land\psi$ & iff & $\mathbb{M},g,w\Vdash \varphi$ and $\mathbb{M},g,w\Vdash\psi$;\\
$\mathbb{M},g,w\Vdash\varphi\to\psi$ & iff & $\mathbb{M},g,w\nVdash \varphi$ or $\mathbb{M},g,w\Vdash\psi$;\\
$\mathbb{M},g,w\Vdash\Diamond\varphi$ & iff & $\exists v(Rwv\ \mbox{ and }\ \mathbb{M},g,v\Vdash\varphi)$;\\
$\mathbb{M},g,w\Vdash \Box\varphi$ & iff & $\forall v(Rwv\ \Rightarrow\ \mathbb{M},g,v\Vdash\varphi)$;\\
$\mathbb{M},g,w\Vdash @_{x}\varphi$ & iff & $\mathbb{M},g,g(x)\Vdash\varphi$;\\
$\mathbb{M},g,w\Vdash @_{\mathbf{i}}\varphi$ & iff & $\mathbb{M},g,V(\nomi)\Vdash\varphi$;\\
$\mathbb{M},g,w\Vdash \downarrow x.\varphi$ & iff & $\mathbb{M},g^{x}_{w},w\Vdash\varphi$.\label{page:downarrow}\\
\end{tabular}
\end{center}

For any formula $\phi$, we use $\llbracket\varphi\rrbracket^{\mathbb{M},g}=\{w\in W\mid \mathbb{M},g,w\Vdash\varphi\}$ to denote the \emph{truth set} of $\varphi$ in $(\mathbb{M},g)$. $\varphi$ is \emph{globally true} on $(\mathbb{M},g)$ (notation: $\mathbb{M},g\Vdash\varphi$) if $\mathbb{M},g,w\Vdash\varphi$ for every $w\in W$. $\varphi$ is \emph{valid} on a Kripke frame $\mathbb{F}$ (notation: $\mathbb{F}\Vdash\varphi$) if $\varphi$ is globally true on $(\mathbb{F},V,g)$ for each valuation $V$ and each assignment $g$.
\end{definition}

For the semantics of inequalities and quasi-inequalities, they are given as follows:
\begin{itemize}
\item $$\mathbb{M},g\Vdash\phi\leq\psi\mbox{ iff }$$$$(\mbox{for all }w\in W, \mbox{ if }\mathbb{M},g,w\Vdash\phi, \mbox{ then }\mathbb{M},g,w\Vdash\psi).$$
\item $$\mathbb{M},g\Vdash\phi_1\leq\psi_1\ \&\ \ldots\ \&\ \phi_n\leq\psi_n\ \Rightarrow\ \phi\leq\psi\mbox{ iff }$$
$$\mathbb{M},g\Vdash\phi\leq\psi\mbox{ holds whenever }\mathbb{M},g\Vdash\phi_i\leq\psi_i\mbox{ for all }1\leq i\leq n.$$
\end{itemize}

The definitions of validity are similar to formulas. It is easy to see that $\mathbb{M},g\Vdash\phi\leq\psi$ iff $\mathbb{M},g\Vdash\phi\to\psi$.

\subsection{Hilbert system}

The axioms and inference rules of the Hilbert system $\mathbf{K}_{\mathcal{H}(@,\downarrow)}$ of $\mathcal{L}(@,\downarrow)$ is given as follows (see \cite{tC05}):
\begin{itemize}
\item[(CT)] $\vdash\phi$ for all classical tautologies $\phi$
\item[(Dual)] $\vdash\Diamond p\leftrightarrow\neg\Box\neg p$
\item[(K)] $\vdash\Box(p\to q)\to(\Box p\to\Box q)$ 
\item[(K$_{@}$)] $\vdash @_{\nomi}(p\to q)\to(@_{\nomi} p\to@_{\nomi} q)$ 
\item[(Selfdual)] $\vdash\neg@_\nomi p\leftrightarrow@_{\nomi}\neg p$
\item[(Ref)] $\vdash@_{\nomi}\nomi$
\item[(Intro)] $\vdash\nomi\land p\to@_{\nomi}p$
\item[(Back)] $\vdash\Diamond@_{\nomi}p\rightarrow @_{\nomi}p$
\item[(Agree)] $\vdash@_\nomi@_\nomj p\to@_{\nomj}p$
\item[(DA)] $\vdash@_{\nomi}(\downarrow x.\phi\leftrightarrow\phi[\nomi/x])$
\item[(Name$_{\downarrow}$)] $\vdash\downarrow x.@_{x}\phi\to\phi$, if $x$ does not occur in $\phi$ 
\item[(BG$_{\downarrow}$)] $\vdash@_{\nomi}\Box\downarrow x.@_{\nomi}\Diamond x$
\item[(MP)] If $\vdash\phi\to\psi$ and $\vdash\phi$ then $\vdash\psi$
\item[(SB)] If $\vdash\phi$ then $\vdash\phi^{\sigma}$, provided that $\sigma$ is safe for $\phi$\footnote{We say that a substitution is \emph{safe} if it does not make free occurrences of state variables $x$ to be substituted into the scope of $\downarrow x$.}
\item[(Nec)] If $\vdash\phi$ then $\vdash\Box\phi$
\item[(Nec$_{@}$)] If $\vdash\phi$ then $\vdash @_{\nomi}\phi$
\item[(Nec$_{\downarrow}$)] If $\vdash\phi$ then $\vdash\downarrow x.\phi$
\end{itemize}

We use $\mathbf{K}_{\mathcal{H}(@,\downarrow)}+\Sigma$ to denote the logic system containing all axioms of $\mathbf{K}_{\mathcal{H}(@,\downarrow)}$ and $\Sigma$, closed under the rules of $\mathbf{K}_{\mathcal{H}(@,\downarrow)}$. We use $\vdash_{\Sigma}\phi$ to denote that $\phi$ is a theorem of $\mathbf{K}_{\mathcal{H}(@,\downarrow)}+\Sigma$. When $\Sigma$ is empty, we use the notation $\vdash\phi$.

We use $\Gamma\vdash_{\Sigma}\phi$ to denote that there are $\gamma_1,\ldots,\gamma_n\in\Gamma$ such that $\vdash_{\Sigma}\gamma_1\land\ldots\land\gamma_n\to\phi$. Given a frame class $\mathcal{F}$, we use $\Gamma\Vdash_{\mathcal{F}}\phi$ to denote that for any frame $\mathbb{F}=(W,R)\in\mathcal{F}$, any valuation $V$ and assignment $g$ on $\mathbb{F}$, any point $w\in W$, if $\mathbb{F},V,g,w\Vdash\gamma$ for all $\gamma\in\Gamma$, then $\mathbb{F},V,g,w\Vdash\phi$.

We have the following derived rules and theorems for any $\Sigma$, which will be useful in Section \ref{Sec:Completeness}:

\begin{itemize}
\item[(Trans)] $\vdash_{\Sigma}@_{\nomj}\phi\land@_{\nomi}\nomj\to@_{\nomi}\phi$
\item[(Sym)] $\vdash_{\Sigma}@_{\nomi}\nomj\to@_{\nomj}\nomi$
\item $\vdash_{\Sigma}@_\nomi(\beta\land\gamma)\leftrightarrow@_\nomi\beta\land@_\nomi\gamma$
\item $\vdash_{\Sigma}\neg@_\nomi(\alpha\lor\beta)\leftrightarrow\neg@_\nomi\alpha\land\neg@_\nomi\beta$
\item $\vdash_{\Sigma}@_{\nomj}\alpha\land@_{\nomi}\Diamond\nomj\to@_{\nomi}\Diamond\alpha$
\item $\vdash_{\Sigma}@_\nomi@_\nomj p\leftrightarrow @_{\nomj}p$
\item $\vdash_{\Sigma}@_{\nomi}\downarrow x.\phi\leftrightarrow @_{\nomi}\phi[\nomi/x]$
\item $\vdash_{\Sigma}@_{\nomj}\alpha\land\neg@_{\nomk}\beta\land\neg@_{\nomi}(\nomj\to\neg\nomk)\to\neg @_{\nomi}(\alpha\to\beta)$
\item $\vdash_{\Sigma}@_{\nomi_i}(\nomi_1\lor\ldots\lor\nomi_n)$ for $1\leq i\leq n$
\item $\vdash_{\Sigma}\neg@_{\nomi_i}(\neg\nomi_1\land\ldots\land\neg\nomi_n)$ for $1\leq i\leq n$
\item[(Res)] If $\vdash_{\Sigma}\gamma\leftrightarrow\delta$, then $\vdash_{\Sigma}\phi\leftrightarrow\phi[\gamma/\delta]$.
\end{itemize}

\begin{definition}[Soundness and Strong Completeness]
We say that $\mathbf{K}_{\mathcal{H}(@,\downarrow)}+\Sigma$ is \emph{sound} with respect to $\mathcal{F}$, if $\Gamma\vdash_{\Sigma}\phi$ implies that $\Gamma\Vdash_{\mathcal{F}}\phi$. We say that $\mathbf{K}_{\mathcal{H}(@,\downarrow)}+\Sigma$ is \emph{strongly complete} with respect to $\mathcal{F}$, if $\Gamma\Vdash_{\mathcal{F}}\phi$ implies that $\Gamma\vdash_{\Sigma}\phi$.
\end{definition}

\begin{theorem}[Theorem 9.4.4 in \cite{tC05}]\label{Completeness:Pure}
For any set $\Sigma$ of pure $\mathcal{L}(@,\downarrow)$-sentences, $\mathbf{K}_{\mathcal{H}(@,\downarrow)}+\Sigma$ is sound and strongly complete with respect to the class of frames defined by $\Sigma$.
\end{theorem}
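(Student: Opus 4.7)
The plan is a Henkin-style construction using named maximal consistent sets, following the Gargov--Goranko template and its extension to $\downarrow$ by Blackburn and ten Cate. Soundness is routine: each axiom is valid on all Kripke frames, each pure $\sigma \in \Sigma$ is by hypothesis valid on every frame in the class it defines, and all rules preserve validity (the side condition on (SB) being precisely what prevents variable capture into $\downarrow$-scopes when applied to pure sentences). So essentially all the work lies in strong completeness.

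Given $\Gamma \not\vdash_\Sigma \phi$, I would enrich $\mathsf{Nom}$ with countably many fresh nominals and extend $\Gamma \cup \{\neg\phi\}$ to a \emph{named, pasted} maximal $\Sigma$-consistent set $\Delta$: named means some $\nomi_0 \in \Delta$, and pasted means (i) whenever $@_\nomj \Diamond \chi \in \Delta$ there is a nominal $\nomk$ with $@_\nomj \Diamond \nomk,\ @_\nomk \chi \in \Delta$, and (ii) whenever $@_\nomj(\downarrow x.\chi) \in \Delta$ then $@_\nomj \chi[\nomj/x] \in \Delta$. A standard enumerate-and-witness Lindenbaum construction produces such a $\Delta$: axiom (BG$_\downarrow$) is what legitimises pasting diamond witnesses uniformly even under binders, and (DA) discharges clause (ii) directly.

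Next I would define the Henkin model on $W = \{[\nomi] : @_\nomi \nomi \in \Delta\}$ with $[\nomi] = [\nomj]$ iff $@_\nomi \nomj \in \Delta$, $R[\nomi][\nomj]$ iff $@_\nomi \Diamond \nomj \in \Delta$, $V(p) = \{[\nomi] : @_\nomi p \in \Delta\}$, $V(\nomi) = \{[\nomi]\}$, and an arbitrary assignment $g$. The Truth Lemma, $\mathbb{M}, g, [\nomi] \Vdash \psi$ iff $@_\nomi \psi \in \Delta$, is proved by induction on $\psi$ using (Intro), (Back), (Agree), (Selfdual), the two pasting conditions, and the derived theorems listed after the axiomatisation. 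This yields $\mathbb{M}, g, [\nomi_0] \Vdash \Gamma$ and $\mathbb{M}, g, [\nomi_0] \not\Vdash \phi$.

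The critical final step is verifying that the underlying frame $\mathbb{F} = (W, R)$ validates every $\sigma \in \Sigma$ under \emph{every} valuation and assignment, not just those induced by $\Delta$. Purity is what makes this possible: given arbitrary $V'$ and $g'$, one reduces truth of $\sigma$ under $(V', g')$ to truth in the Henkin model of a safe substitution instance $\sigma'$ of $\sigma$ obtained by renaming nominals and state variables to match $V'$ and $g'$; since $\sigma'$ is derivable from $\Sigma$ via (SB), every $@_\nomi \sigma'$ lies in $\Delta$, and the Truth Lemma finishes the argument. The main obstacle I expect is the construction of $\Delta$ itself: one must simultaneously paste diamond witnesses, enforce the $\downarrow$-unfolding, and respect the safe-substitution side condition to avoid capture inside $\downarrow x$-scopes, and this is precisely where (BG$_\downarrow$) and (DA) combine to let the fresh-nominal witnessing go through coherently under the binder.
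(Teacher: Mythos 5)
The paper does not prove this theorem at all: it is imported verbatim from ten Cate's thesis (Theorem 9.4.4), so there is no in-paper proof to compare against, and your outline reconstructs essentially the standard argument from that source — a Henkin-style named and pasted maximal consistent set, the induced named model with a Truth Lemma, and frame validity of the pure axioms obtained by combining naming with closure under (safe) substitution of nominals for nominals and free state variables. Two cosmetic remarks only: your pasting clause (ii) for $\downarrow$ is automatic from axiom (DA) in any maximal consistent set rather than something that needs to be witnessed during the Lindenbaum construction, and the Truth Lemma as you state it should be restricted to sentences (or relativized to assignments sending state variables to named worlds), since the truth of an open formula at $[\nomi]$ depends on $g$ while membership of $@_{\nomi}\psi$ in $\Delta$ does not.
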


\section{Ingredients of algorithmic correspondence}\label{Sec:Prelim:ALBA}

In this paper, we give a restricted version $\mathsf{ALBA}^{\downarrow}_{\mathsf{Restricted}}$ of the correspondence algorithm $\mathsf{ALBA}^{\downarrow}$ for hybrid logic with binder defined in \cite{Zh21c}. The algorithm $\mathsf{ALBA}^{\downarrow}_{\mathsf{Restricted}}$ transforms the input hybrid formula $\varphi\to\psi$ into an equivalent set of pure quasi-inequalities which does not contain occurrences of propositional variables\footnote{Notice that here we do not use the expanded modal language as in \cite{Zh21c}.}.

Since the purpose of the algorithm is to give Hilbert-style proof of the hybrid pure correspondence $\pi$ of the input skeletal Sahlqvist formula $\phi\to\psi$, the ingredients we will give is different from the one in \cite{Zh21c}. They can be listed as follows:

\begin{itemize}
\item An algorithm $\mathsf{ALBA}^{\downarrow}_{\mathsf{Restricted}}$ which transforms a given hybrid formula $\varphi\to\psi$ into equivalent pure quasi-inequalities $\mathsf{Pure}(\varphi\to\psi)$;
\item A syntactically identified class of inequalities on which the algorithm is successful;
\item A translation of the inequalities and quasi-inequalities involved in the algorithm into hybrid formulas;
\item A proof that for a given skeletal Sahlqvist formula $\phi\to\psi$, for each step of the execution of $\mathsf{ALBA}^{\downarrow}_{\mathsf{Restricted}}$, the translation of the resulting quasi-inequality is provable in $\mathbf{K}_{\mathcal{H}(@,\downarrow)}+(\phi\to\psi)$, therefore $\pi$ is provable in $\mathbf{K}_{\mathcal{H}(@,\downarrow)}+(\phi\to\psi)$.
\end{itemize}

In the remainder of the paper, we will give the definition of skeletal Sahlqvist inequalities (Section \ref{Sec:Sahl}), define a modified version of the algorithm $\mathsf{ALBA}^{\downarrow}_{\mathsf{Restricted}}$ (Section \ref{Sec:ALBA}), and success on Sahlqvist inequalities (Section \ref{Sec:Success}). We will give a translation of the inequalities and quasi-inequalities involved in the algorithm into hybrid formulas as well as prove that for a given skeletal Sahlqvist formula $\phi\to\psi$, for each step of the execution of $\mathsf{ALBA}^{\downarrow}_{\mathsf{Restricted}}$, the translation of the resulting quasi-inequality is provable in $\mathbf{K}_{\mathcal{H}(@,\downarrow)}+(\phi\to\psi)$ (Section \ref{Sec:Completeness}).

\section{Skeletal Sahlqvist inequalities}\label{Sec:Sahl}

In the present section, since we will use the algorithm $\mathsf{ALBA}^{\downarrow}_{\mathsf{Restricted}}$ which is based on the classsification of nodes in the signed generation trees of hybrid modal formulas, we will use the unified correspondence style definition (cf.\ \cite{CPZ:Trans,PaSoZh16,Zh21}) to define skeletal Sahlqvist formulas. We will collect all the necessary preliminaries on skeletal Sahlqvist formulas. For the sake of the algorithm, we will find it convenient to use inequalities $\phi\leq\psi$ instead of implicative formulas $\phi\to\psi$.

\begin{definition}[Order-type](cf.\ \cite[page 346]{CoPa12})
For any $n$-tuple of propositional variables $(p_1, \ldots, p_n)$, an order-type of $(p_1, \ldots, p_n)$ is an element $\epsilon$ in $\{1,\partial\}^{n}$. We call $p_i$ has order-type 1 with respect to $\epsilon$ if $\epsilon_i=1$, and write $\epsilon(i)=1$ or $\epsilon(p_i)=1$; we call $p_i$ has order-type $\partial$ with respect to $\epsilon$ if $\epsilon_i=\partial$, and write $\epsilon(i)=\partial$ or $\epsilon(p_i)=\partial$. We use $\epsilon^{\partial}$ to denote the opposite order-type of $\epsilon$ where $\epsilon^{\partial}(p_i)=1$ (resp.\ $\epsilon^{\partial}(p_i)=\partial$) iff $\epsilon(p_i)=\partial$ (resp.\ $\epsilon(p_i)=1$).
\end{definition}

\begin{definition}[Signed generation tree]\label{adef: signed gen tree}(cf.\ \cite[Definition 4]{CPZ:Trans})
The \emph{positive} (resp.\ \emph{negative}) {\em generation tree} of any given hybrid modal formula $\phi$ is defined as follows:

We first label the root of the generation tree of $\phi$ with $+$ (resp.\ $-$), then label the children nodes as below:
\begin{itemize}
\item If a node is labelled with $\lor, \land, \Box$, $\Diamond$, $\downarrow x$, then label the same sign to its children nodes;
\item If a node is labelled with $\neg$, then label the opposite sign to its child node;
\item If a node is labelled with $\to$, then label the opposite sign to the first child node and the same sign to the second child node;
\item If a node is labelled with $@$, then label the same sign to the second child node (notice that we do not label the first child node with nominal or state variable).
\end{itemize}
Nodes in signed generation trees are \emph{positive} (resp.\ \emph{negative}) if they are signed $+$ (resp.\ $-$).
\end{definition}

\begin{example}

The positive generation tree of $+\Diamond (p \lor \neg\Box q)\to\Diamond q$ is given in Figure \ref{fig:Church:Rosser}.

\begin{figure}[htb]
\centering
\begin{tikzpicture}
\tikzstyle{level 1}=[level distance=0.8cm, sibling distance=1.5cm]
\tikzstyle{level 2}=[level distance=0.8cm, sibling distance=1.5cm]
\tikzstyle{level 3}=[level distance=0.8cm, sibling distance=1.5cm]
 \node {$+\to$}         
              child{node{$-\Diamond$}
                     child{node{$-\lor$}
                           child{node{$-p$}}
                            child{node{$-\neg$}
                             child{node{$+\Box$}
                                child{node{$+q$}}}}}}
              child{node{$+\Diamond$}
                        child{node{$+q$}}} 
;
\end{tikzpicture}
\caption{Positive generation tree for $\Diamond (p \lor \neg\Box q)\to\Diamond q$}
\label{fig:Church:Rosser}
\end{figure}
\end{example}

We will use signed generation trees in the inequalities $\phi\leq\psi$, where we use the positive generation tree $+\phi$ and the negative generation tree $-\psi$. We call an inequality $\phi\leq\psi$ \emph{uniform} in a variable $p_i$ if all occurrences of $p_i$ in $+\phi$ and $-\psi$ have the same sign, and call $\phi\leq\psi$ $\epsilon$-\emph{uniform} in an array $\vec{p}$ if $\phi\leq\psi$ is uniform in $p_i$, occurring with the sign indicated by $\epsilon$ (i.e., $p_i$ has the sign $+$ (resp.\ $-$) if $\epsilon(p_i)=1$ (resp.\ $\partial$)), for each $p_i$ in $\vec{p}$.

For any order-type $\epsilon$ over $n$, any formula $\phi(p_1,\ldots p_n)$,  any $1 \leq i \leq n$, an \emph{$\epsilon$-critical node} in a signed generation tree $\ast\phi$ (where $\ast\in\{+,-\}$) is a leaf node $+p_i$ (when $\epsilon_i = 1$) or $-p_i$ (when $\epsilon_i = \partial$). An $\epsilon$-{\em critical branch} in a signed generation tree $\ast\phi$ is a branch from an $\epsilon$-critical node. The $\epsilon$-critical branches are those which the algorithm $\mathsf{ALBA}^{\downarrow}_{\mathsf{Restricted}}$ will solve for. We say that $\ast\phi$ {\em agrees with} $\epsilon$, and write $\epsilon(\ast\phi)$, if every leaf node with a propositional variable $p$ in the signed generation tree $\ast\phi$ is $\epsilon$-critical.

We use $+\psi\prec\ast\phi$ (resp.\ $-\psi\prec\ast\phi$) to denote that an occurrence of a subformula $\psi$ inherits the positive (resp.\ negative) sign from the signed generation tree $\ast\phi$. We use $\epsilon(\gamma) \prec \ast \phi$ (resp.\ $\epsilon^\partial(\gamma) \prec \ast \phi$) to denote that the signed generation subtree $\gamma$, with the sign inherited from $\ast \phi$, agrees with $\epsilon$ (resp.\ $\epsilon^\partial$). A propositional variable $p$ is \emph{positive} (resp.\ \emph{negative}) in $\phi$ if $+p\prec+\phi$ (resp.\ $-p\prec+\phi$) for all occurrences of $p$ in $\phi$.\label{page:epsilon:subtree}

\begin{definition}\label{adef:good:branches}(cf.\ \cite[Definition 5]{CPZ:Trans})
Nodes in signed generation trees are called \emph{skeletal nodes}, according to Table \ref{aJoin:and:Meet:Friendly:Table}. For the names of skeletal nodes, see \cite[Remark 3.24]{PaSoZh16}. A branch in a signed generation tree is called a \emph{skeletal branch} if it consists (apart from variable nodes) of skeletal nodes only.
\begin{table}
\begin{center}
\begin{tabular}{| c | c |}
\hline
Skeletal\\
\hline
\begin{tabular}{c c c c c c c c c c }
$+$ & $\vee$ & $\wedge$ &$\Diamond$ & $\neg$ & $\downarrow x$ & @\\
$-$ & $\wedge$ & $\vee$ &$\Box$ & $\neg$ & $\downarrow x$ & @ & $\to$\\
\end{tabular}\\
\hline
\end{tabular}
\end{center}
\caption{Skeletal nodes.}\label{aJoin:and:Meet:Friendly:Table}
\vspace{-1em}
\end{table}
\end{definition}

\begin{definition}[skeletal Sahlqvist inequalities\footnote{This name comes from \cite{ConRob}.} and formulas]\label{aInducive:Ineq:Def}(cf.\ \cite[Definition 2.4]{ConRob})
For any order-type $\epsilon$, the signed generation tree $*\phi(p_1,\ldots p_n)$ is \emph{$\epsilon$-skeletal Sahlqvist} if for all $1\leq i\leq n$, every $\epsilon$-critical branch with leaf $p_i$ is skeletal. An inequality $\phi\leq\psi$ is \emph{$\epsilon$-skeletal Sahlqvist} if the signed generation trees $+\phi$ and $-\psi$ are $\epsilon$-skeletal Sahlqvist. An inequality $\phi\leq\psi$ is \emph{skeletal Sahlqvist} if it is \emph{$\epsilon$}-skeletal Sahlqvist for some $\epsilon$. An implicative formula $\phi\to\psi$ is \emph{skeletal Sahlqvist} if $\phi\leq\psi$ is skeletal Sahlqvist.
\end{definition}

\begin{example}\label{Example:Sahlqvist}
Here we give an example of a skeletal Sahlqvist inequality for the order-type $\epsilon=(1,1)$, where the skeletal nodes are marked with $S$, and the leaf nodes of $\epsilon$-critical branches are marked with $C$. It is clear that the branch from $+p_2$ to $+\land$ and the branch from $+p_1$ to $+\land$ are both $\epsilon$-critical and skeletal.

\begin{figure}[htb]
\centering
\begin{multicols}{3}
\begin{tikzpicture}
\tikzstyle{level 1}=[level distance=0.8cm, sibling distance=2cm]
\tikzstyle{level 2}=[level distance=0.8cm, sibling distance=2cm]
\tikzstyle{level 3}=[level distance=0.8cm, sibling distance=1.5cm]
 \node {$+\land,S$} 
              child{node{$+\Diamond,S$}
                     child{node{$+p_1,C$}}}
              child{node{$+p_2,C$}}
;
\end{tikzpicture}

\columnbreak

$\leq$

\columnbreak

\begin{tikzpicture}
\tikzstyle{level 1}=[level distance=1cm, sibling distance=1cm]
\tikzstyle{level 2}=[level distance=1cm, sibling distance=1cm]
\tikzstyle{level 3}=[level distance=1cm, sibling distance=1cm]
 \node {$-\lor,S$}         
                     child{node{$-\Diamond$}
                           child{node{$-\Box$}
                                 child{node{$-\Diamond$}
                                       child{node{$-p_1$}}}}}
                     child{node{$-\Diamond$}
                           child{node{$-\Box$}
                                 child{node{$-\Diamond$}
                                       child{node{$-p_2$}}}}}
;
\end{tikzpicture}
\end{multicols}
\caption{(1,1)-skeletal Sahlqvist inequality $\Diamond p_1\land p_2\leq \Diamond\Box\Diamond p_1\lor\Diamond\Box\Diamond p_2$}
\label{fig:Sahlqvist}
\end{figure}
\end{example}

\section{The algorithm $\mathsf{ALBA}^{\downarrow}_{\mathsf{Restricted}}$}\label{Sec:ALBA}

In the present section, we define the modified version of the correspondence algorithm $\mathsf{ALBA}^{\downarrow}_{\mathsf{Restricted}}$ for hybrid logic with binder, which is a partial version of the algorithm $\mathsf{ALBA}^{\downarrow}$ in \cite{Zh21c}. First of all, the algorithm receives an input formula $\phi\to\psi$ and transforms it into an inequality $\phi\leq\psi$. Then the algorithm goes in three steps.

\begin{enumerate}

\item \textbf{Preprocessing and first approximation}:

In the generation tree of $+\phi$ and $-\psi$\footnote{The algorithm relies on signed generation trees in Section \ref{Sec:Sahl}. We will identify a signed formula with its signed generation tree.},

\begin{enumerate}
\item Apply the distribution rules:

\begin{enumerate}
\item Push down $+\Diamond, +\downarrow x, +@_{\nomi}, +@_{x}, -\neg, +\land, -\to$ by distributing them over nodes labelled with $+\lor$ which are skeletal nodes (see Figure \ref{Figure:distribution:rules}; notice that here we treat $@_{\nomi}$ and $@_{x}$ as unary modality with only the right branch as the input, and $\triangle\in\{\Diamond,\downarrow x, @_\nomi, @_x\}$), and

\item Push down $-\Box, -\downarrow x, -@_{\nomi}, -@_{x}, +\neg, -\lor, -\to$ by distributing them over nodes labelled with $-\land$ which are skeletal nodes (see Figure \ref{Figure:distribution:rules:2}; here $\triangle\in\{\Box,\downarrow x, @_\nomi, @_x\}$).

\end{enumerate}

\begin{figure}[htb]
\centering
\begin{multicols}{8}
\begin{tikzpicture}[scale=0.7]
\tikzstyle{level 1}=[level distance=1cm, sibling distance=1cm]
\tikzstyle{level 2}=[level distance=1cm, sibling distance=1cm]
\tikzstyle{level 3}=[level distance=1cm, sibling distance=1cm]
 \node {$+\triangle$}         
              child{node{$+\lor$}
                     child{node{$+\alpha$}}
                           child{node{$+\beta$}}}
;
\end{tikzpicture}
\columnbreak

$\Rightarrow$
\columnbreak

\begin{tikzpicture}[scale=0.7]
\tikzstyle{level 1}=[level distance=1cm, sibling distance=1cm]
\tikzstyle{level 2}=[level distance=1cm, sibling distance=1cm]
\tikzstyle{level 3}=[level distance=1cm, sibling distance=1cm]
 \node {$+\lor$}
              child{node{$+\triangle$}
                     child{node{$+\alpha$}}}
              child{node{$+\triangle$}
                           child{node{$+\beta$}}}
 ;
\end{tikzpicture}

\columnbreak

$\ $
\columnbreak

$\ $
\columnbreak

\begin{tikzpicture}[scale=0.7]
\tikzstyle{level 1}=[level distance=1cm, sibling distance=1cm]
\tikzstyle{level 2}=[level distance=1cm, sibling distance=1cm]
\tikzstyle{level 3}=[level distance=1cm, sibling distance=1cm]
 \node {$-\neg$}         
              child{node{$+\lor$}
                     child{node{$+\alpha$}}
                           child{node{$+\beta$}}}
 ;
\end{tikzpicture}

\columnbreak

$\Rightarrow$
\columnbreak

\begin{tikzpicture}[scale=0.7]
\tikzstyle{level 1}=[level distance=1cm, sibling distance=1cm]
\tikzstyle{level 2}=[level distance=1cm, sibling distance=1cm]
\tikzstyle{level 3}=[level distance=1cm, sibling distance=1cm]
 \node {$-\land$}
              child{node{$-\neg$}
                     child{node{$+\alpha$}}}
              child{node{$-\neg$}
                           child{node{$+\beta$}}}
 ;
\end{tikzpicture}
\end{multicols}

\centering
\begin{multicols}{8}
\begin{tikzpicture}[scale=0.7]
\tikzstyle{level 1}=[level distance=1cm, sibling distance=1cm]
\tikzstyle{level 2}=[level distance=1cm, sibling distance=1cm]
\tikzstyle{level 3}=[level distance=1cm, sibling distance=1cm]
 \node {$+\land$}         
              child{node{+$\alpha$}}
              child{node{$+\lor$}
                     child{node{$+\beta$}}
                           child{node{$+\gamma$}}}
 ;
\end{tikzpicture}

\columnbreak

$\ \ \ \ \ \ \ \ \Rightarrow$
\columnbreak

\begin{tikzpicture}[scale=0.7]
\tikzstyle{level 1}=[level distance=1cm, sibling distance=2cm]
\tikzstyle{level 2}=[level distance=1cm, sibling distance=1cm]
\tikzstyle{level 3}=[level distance=1cm, sibling distance=1cm]
 \node {$+\lor$}
              child{node{$+\land$}
                     child{node{$+\alpha$}}
                     child{node{$+\beta$}}}
              child{node{$+\land$}
                           child{node{$+\alpha$}}
                           child{node{$+\gamma$}}}
 ;
\end{tikzpicture}
\columnbreak

$\ $
\columnbreak

$\ $
\columnbreak

\begin{tikzpicture}[scale=0.7]
\tikzstyle{level 1}=[level distance=1cm, sibling distance=1cm]
\tikzstyle{level 2}=[level distance=1cm, sibling distance=1cm]
\tikzstyle{level 3}=[level distance=1cm, sibling distance=1cm]
 \node {$+\land$}         
              child{node{$+\lor$}
                     child{node{$+\alpha$}}
                           child{node{$+\beta$}}}
              child{node{+$\gamma$}}
;
\end{tikzpicture}
\columnbreak

$\ \ \ \ \ \ \ \ \Rightarrow$
\columnbreak

\begin{tikzpicture}[scale=0.7]
\tikzstyle{level 1}=[level distance=1cm, sibling distance=2cm]
\tikzstyle{level 2}=[level distance=1cm, sibling distance=1cm]
\tikzstyle{level 3}=[level distance=1cm, sibling distance=1cm]
 \node {$+\lor$}
              child{node{$+\land$}
                     child{node{$+\alpha$}}
                     child{node{$+\gamma$}}}
              child{node{$+\land$}
                           child{node{$+\beta$}}
                           child{node{$+\gamma$}}}
 ;
\end{tikzpicture}
\end{multicols}

\centering
\begin{multicols}{3}

\begin{tikzpicture}[scale=0.7]
\tikzstyle{level 1}=[level distance=1cm, sibling distance=1cm]
\tikzstyle{level 2}=[level distance=1cm, sibling distance=1cm]
\tikzstyle{level 3}=[level distance=1cm, sibling distance=1cm]
 \node {$-\to$}         
              child{node{$+\lor$}
                     child{node{$+\alpha$}}
                           child{node{$+\beta$}}}
              child{node{$-\gamma$}}
;
\end{tikzpicture}
\columnbreak

$\ \ \ \ \ \ \ \ \Rightarrow$
\columnbreak

\begin{tikzpicture}[scale=0.7]
\tikzstyle{level 1}=[level distance=1cm, sibling distance=2cm]
\tikzstyle{level 2}=[level distance=1cm, sibling distance=1cm]
\tikzstyle{level 3}=[level distance=1cm, sibling distance=1cm]
 \node {$-\land$}
              child{node{$-\to$}
                     child{node{$+\alpha$}}
                     child{node{$-\gamma$}}}
              child{node{$-\to$}
                           child{node{$+\beta$}}
                           child{node{$-\gamma$}}}
 ;
\end{tikzpicture}
\end{multicols}
\caption{Distribution rules for $+\lor$}
\label{Figure:distribution:rules}
\end{figure}

\begin{figure}[htb]

\centering
\begin{multicols}{8}
\begin{tikzpicture}[scale=0.7]
\tikzstyle{level 1}=[level distance=1cm, sibling distance=1cm]
\tikzstyle{level 2}=[level distance=1cm, sibling distance=1cm]
\tikzstyle{level 3}=[level distance=1cm, sibling distance=1cm]
 \node {$-\triangle$}         
              child{node{$-\land$}
                     child{node{$-\alpha$}}
                           child{node{$-\beta$}}}
 ;
\end{tikzpicture}

\columnbreak

$\ \ \ \ \ \ \ \ \Rightarrow$
\columnbreak

\begin{tikzpicture}[scale=0.7]
\tikzstyle{level 1}=[level distance=1cm, sibling distance=1cm]
\tikzstyle{level 2}=[level distance=1cm, sibling distance=1cm]
\tikzstyle{level 3}=[level distance=1cm, sibling distance=1cm]
 \node {$-\land$}
              child{node{$-\triangle$}
                     child{node{$-\alpha$}}}
              child{node{$-\triangle$}
                           child{node{$-\beta$}}}
 ;
\end{tikzpicture}

\columnbreak

$\ $
\columnbreak

$\ $
\columnbreak

\begin{tikzpicture}[scale=0.7]
\tikzstyle{level 1}=[level distance=1cm, sibling distance=1cm]
\tikzstyle{level 2}=[level distance=1cm, sibling distance=1cm]
\tikzstyle{level 3}=[level distance=1cm, sibling distance=1cm]
 \node {$+\neg$}         
              child{node{$-\land$}
                     child{node{$-\alpha$}}
                           child{node{$-\beta$}}}
;
\end{tikzpicture}
\columnbreak

$\Rightarrow$
\columnbreak

\begin{tikzpicture}[scale=0.7]
\tikzstyle{level 1}=[level distance=1cm, sibling distance=1cm]
\tikzstyle{level 2}=[level distance=1cm, sibling distance=1cm]
\tikzstyle{level 3}=[level distance=1cm, sibling distance=1cm]
 \node {$+\lor$}
              child{node{$+\neg$}
                     child{node{$-\alpha$}}}
              child{node{$+\neg$}
                           child{node{$-\beta$}}}
 ;
\end{tikzpicture}
\end{multicols}

\centering
\begin{multicols}{8}
\begin{tikzpicture}[scale=0.7]
\tikzstyle{level 1}=[level distance=1cm, sibling distance=1cm]
\tikzstyle{level 2}=[level distance=1cm, sibling distance=1cm]
\tikzstyle{level 3}=[level distance=1cm, sibling distance=1cm]
 \node {$-\lor$}         
              child{node{$-\alpha$}}
              child{node{$-\land$}
                     child{node{$-\beta$}}
                           child{node{$-\gamma$}}}
;
\end{tikzpicture}
\columnbreak

$\Rightarrow$
\columnbreak

\begin{tikzpicture}[scale=0.7]
\tikzstyle{level 1}=[level distance=1cm, sibling distance=2cm]
\tikzstyle{level 2}=[level distance=1cm, sibling distance=1cm]
\tikzstyle{level 3}=[level distance=1cm, sibling distance=1cm]
 \node {$-\land$}
              child{node{$-\lor$}
                     child{node{$-\alpha$}}
                     child{node{$-\beta$}}}
              child{node{$-\lor$}
                           child{node{$-\alpha$}}
                           child{node{$-\gamma$}}}
 ;
\end{tikzpicture}
\columnbreak

$\ $
\columnbreak

$\ $
\columnbreak

\begin{tikzpicture}[scale=0.7]
\tikzstyle{level 1}=[level distance=1cm, sibling distance=1cm]
\tikzstyle{level 2}=[level distance=1cm, sibling distance=1cm]
\tikzstyle{level 3}=[level distance=1cm, sibling distance=1cm]
 \node {$-\lor$}         
              child{node{$-\land$}
                     child{node{$-\alpha$}}
                           child{node{$-\beta$}}}
              child{node{$-\gamma$}}
 ;
\end{tikzpicture}

\columnbreak

$\Rightarrow$
\columnbreak

\begin{tikzpicture}[scale=0.7]
\tikzstyle{level 1}=[level distance=1cm, sibling distance=2cm]
\tikzstyle{level 2}=[level distance=1cm, sibling distance=1cm]
\tikzstyle{level 3}=[level distance=1cm, sibling distance=1cm]
 \node {$-\land$}
              child{node{$-\lor$}
                     child{node{$-\alpha$}}
                     child{node{$-\gamma$}}}
              child{node{$-\lor$}
                           child{node{$-\beta$}}
                           child{node{$-\gamma$}}}
 ;
\end{tikzpicture}
\end{multicols}

\centering
\begin{multicols}{3}
\begin{tikzpicture}[scale=0.7]
\tikzstyle{level 1}=[level distance=1cm, sibling distance=1cm]
\tikzstyle{level 2}=[level distance=1cm, sibling distance=1cm]
\tikzstyle{level 3}=[level distance=1cm, sibling distance=1cm]
 \node {$-\to$}         
              child{node{+$\alpha$}}
              child{node{$-\land$}
                     child{node{$-\beta$}}
                           child{node{$-\gamma$}}}
 ;
\end{tikzpicture}

\columnbreak

$\Rightarrow$
\columnbreak

\begin{tikzpicture}[scale=0.7]
\tikzstyle{level 1}=[level distance=1cm, sibling distance=2cm]
\tikzstyle{level 2}=[level distance=1cm, sibling distance=1cm]
\tikzstyle{level 3}=[level distance=1cm, sibling distance=1cm]
 \node {$-\land$}
              child{node{$-\to$}
                     child{node{$+\alpha$}}
                     child{node{$-\beta$}}}
              child{node{$-\to$}
                           child{node{$+\alpha$}}
                           child{node{$-\gamma$}}}
 ;
\end{tikzpicture}
\end{multicols}
\caption{Distribution rules for $-\land$}
\label{Figure:distribution:rules:2}
\end{figure}
\item Apply the splitting rules:
$$\infer{\alpha\leq\gamma\ \ \ \beta\leq\gamma}{\alpha\lor\beta\leq\gamma}
\qquad
\infer{\alpha\leq\beta\ \ \ \alpha\leq\gamma}{\alpha\leq\beta\land\gamma}
$$
\item Apply the monotone and antitone variable-elimination rules\footnote{Here the monotone and antitone variable elimination rules eliminate propositional variables $p$ where the inequality is semantically monotone or antitone with respect to $p$.}:
$$\infer{\alpha(\perp)\leq\beta(\perp)}{\alpha(p)\leq\beta(p)}
\qquad
\infer{\beta(\top)\leq\alpha(\top)}{\beta(p)\leq\alpha(p)}
$$
for $\beta(p)$ positive in $p$ and $\alpha(p)$ negative in $p$.
\end{enumerate}
We denote by $\mathsf{Preprocess}(\phi\leq\psi)$ the finite set $\{\phi_i\leq\psi_i\}_{i\in I}$ of inequalities obtained after applying the previous rules exhaustively. Then we apply the following \emph{first approximation rule} to every inequality in $\mathsf{Preprocess}(\phi\leq\psi)$:
$$\infer{\nomi_0\leq\phi_i\ \ \ \psi_i\leq \neg\nomi_1}{\phi_i\leq\psi_i}
$$
Here, $\nomi_0$ and $\nomi_1$ are fresh nominals. Now we get a set of sets of inequalities $\{\nomi_0\leq\phi_i, \psi_i\leq \neg\nomi_1\}_{i\in I}$. We call the set $\{\nomi_0\leq\phi_i, \psi_i\leq \neg\nomi_1\}$ \emph{system}.
\item \textbf{The reduction stage}:
In this stage, for each $\{\nomi_0\leq\phi_i, \psi_i\leq \neg\nomi_1\}$, we apply the following rules to prepare for eliminating all the proposition variables in $\{\nomi_0\leq\phi_i, \psi_i\leq\neg\nomi_1\}$:
\begin{enumerate}
\item \textbf{Substage 1: Decomposing the skeletal branch}
In the current substage, the following rules are applied to decompose the skeletal branches of the signed  skeletal Sahlqvist formula:
\begin{enumerate}
\item Splitting rules:
$$
\infer{\nomi\leq\beta\ \ \ \nomi\leq\gamma}{\nomi\leq\beta\land\gamma}
\qquad
\infer{\alpha\leq\neg\nomi\ \ \ \beta\leq\neg\nomi}{\alpha\lor\beta\leq\neg\nomi}
$$
\item Approximation rules:
$$
\infer{\nomj\leq\alpha\ \ \ \nomi\leq\Diamond\nomj}{\nomi\leq\Diamond\alpha}
\qquad
\infer{\nomj\leq\alpha\ \ \ x\leq\Diamond\nomj}{x\leq\Diamond\alpha}
\qquad
\infer{\alpha\leq\neg\nomj\ \ \ \Box\neg\nomj\leq\neg\nomi}{\Box\alpha\leq\neg\nomi}
\qquad
\infer{\alpha\leq\neg\nomj\ \ \ \Box\neg\nomj\leq\neg x}{\Box\alpha\leq\neg x}
$$
$$
\infer{\nomj\leq\alpha}{\nomi\leq @_{\nomj}\alpha}
\qquad
\infer{\nomj\leq\alpha}{x\leq @_{\nomj}\alpha}
\qquad
\infer{\alpha\leq\neg\nomj}{@_{\nomj}\alpha\leq\neg\nomi}
\qquad
\infer{\alpha\leq\neg\nomj}{@_{\nomj}\alpha\leq\neg x}
$$
$$
\infer{x\leq\alpha}{\nomi\leq @_{x}\alpha}
\qquad
\infer{x\leq\alpha}{y\leq @_{x}\alpha}
\qquad
\infer{\alpha\leq\neg x}{@_{x}\alpha\leq\neg\nomi}
\qquad
\infer{\alpha\leq\neg x}{@_{x}\alpha\leq\neg y}
$$
$$
\infer{\nomi\leq\alpha[\nomi/x]}{\nomi\leq \downarrow x.\alpha}
\qquad
\infer{y\leq\alpha[y/x]}{y\leq \downarrow x.\alpha}
\qquad
\infer{\alpha[\nomi/x]\leq\neg\nomi}{\downarrow x.\alpha\leq\neg\nomi}
\qquad
\infer{\alpha[y/x]\leq\neg y}{\downarrow x.\alpha\leq\neg y}
$$
$$
\infer{\nomj\leq\alpha\ \ \ \ \ \ \ \beta\leq\neg\nomk\ \ \ \ \ \ \ \nomj\rightarrow\neg\nomk\leq\neg\nomi}{\alpha\rightarrow\beta\leq\neg\nomi}
$$
$$\infer{\nomj\leq\alpha\ \ \ \ \ \ \ \beta\leq\neg\nomk\ \ \ \ \ \ \ \nomj\rightarrow\neg\nomk\leq\neg x}{\alpha\rightarrow\beta\leq\neg x}
$$
The nominals introduced by the approximation rules must not occur in the system before applying the rule, and $\alpha[\nomi/x]$ (resp.\ $\alpha[y/x]$) indicates that all occurrences of $x$ in $\alpha$ are replaced by $\nomi$ (resp.\ $y$).
\item Residuation rules:
$$
\infer{\alpha\leq\neg\nomi}{\nomi\leq\neg\alpha}
\qquad
\infer{\nomi\leq\alpha}{\neg\alpha\leq\neg\nomi}
\qquad
\infer{\alpha\leq\neg x}{x\leq\neg\alpha}
\qquad
\infer{x\leq\alpha}{\neg\alpha\leq\neg x}
$$
\end{enumerate}
\item \textbf{Substage 2: The Ackermann stage}\footnote{In the Ackermann stage, for the sake of simplicity, we use $\nomi$ to denote both nominals and state variables, since their behaviours at this stage are essentially the same.}

In the present substage, we compute the minimal/maximal valuations for propositional variables and use the Ackermann rules to eliminate all the propositional variables. The two rules are the core of $\mathsf{ALBA}^{\downarrow}_{\mathsf{Restricted}}$, since their applications eliminate propositional variables. In fact, the previous substage aims at reaching a shape where the Ackermann rules can be applied. Notice that the Ackermann rules are executed on the whole set of inequalities rather than on a single inequality.\\

The right-handed Ackermann rule:

The system 
$\left\{ \begin{array}{ll}
\nomi_1\leq p \\
\vdots\\
\nomi_n\leq p \\
\nomj_1\leq\gamma_1\\
\vdots\\
\nomj_m\leq\gamma_m\\
\beta_1\leq\neg\nomk_1\\
\vdots\\
\beta_k\leq\neg\nomk_k\\
\end{array} \right.$ 
is replaced by 
$\left\{ \begin{array}{ll}
\nomj_1\leq\gamma_1[(\nomi_1\lor\ldots\lor\nomi_n)/p]\\
\vdots\\
\nomj_m\leq\gamma_m[(\nomi_1\lor\ldots\lor\nomi_n)/p]\\
\beta_1[(\nomi_1\lor\ldots\lor\nomi_n)/p]\leq\neg\nomk_1\\
\vdots\\
\beta_k[(\nomi_1\lor\ldots\lor\nomi_n)/p]\leq\neg\nomk_k\\

\end{array} \right.$

where each $\beta_i$ is positive, and each $\gamma_j$ negative in $p$;\\

The left-handed Ackermann rule:

The system
$\left\{ \begin{array}{ll}
p\leq\neg\nomi_1 \\
\vdots\\
p\leq\neg\nomi_n \\
\nomj_1\leq\gamma_1\\
\vdots\\
\nomj_m\leq\gamma_m\\
\beta_1\leq\neg\nomk_1\\
\vdots\\
\beta_k\leq\neg\nomk_k\\
\end{array} \right.$
is replaced by
$\left\{ \begin{array}{ll}
\nomj_1\leq\gamma_1[(\neg\nomi_1\land\ldots\land\neg\nomi_n)/p]\\
\vdots\\
\nomj_m\leq\gamma_m[(\neg\nomi_1\land\ldots\land\neg\nomi_n)/p]\\
\beta_1[(\neg\nomi_1\land\ldots\land\neg\nomi_n)/p]\leq\neg\nomk_1\\
\vdots\\
\beta_m[(\neg\nomi_1\land\ldots\land\neg\nomi_n)/p]\leq\neg\nomk_k\\
\end{array} \right.$

where each $\beta_i$ is negative, and each $\gamma_j$ positive in $p$.
\end{enumerate}
\item \textbf{Output}: If in the previous stage, for some $\{\nomi_0\leq\phi_i, \psi_i\leq \neg\nomi_1\}$, the algorithm gets stuck, i.e.\ some propositional variables cannot be eliminated by the  reduction rules, then the algorithm stops and output ``failure''. Otherwise, each initial tuple $\{\nomi_0\leq\phi_i, \psi_i\leq \neg\nomi_1\}$ of inequalities after the first approximation has been reduced to a set of pure inequalities $\mathsf{Reduce}(\phi_i\leq\psi_i)$, and then the output is a set of pure quasi-inequalities $\{\&\mathsf{Reduce}(\phi_i\leq\psi_i)\Rightarrow \nomi_0\leq \neg\nomi_1: \phi_i\leq\psi_i\in\mathsf{Preprocess}(\phi\leq\psi)\}$. Finally we uniformly substitute all free occurrences of state variables by fresh nominals, and denote the set of pure quasi-inequalities $\mathsf{Pure}(\phi\to\psi)$.
\end{enumerate}

Since the algorithm $\mathsf{ALBA}^{\downarrow}_{\mathsf{Restricted}}$ is a restricted version of the algorithm $\mathsf{ALBA}^{\downarrow}$, its soundness follows from the soundness of $\mathsf{ALBA}^{\downarrow}$.

\begin{theorem}[Soundness of the algorithm]\label{Thm:Soundness}
If $\mathsf{ALBA}^{\downarrow}_{\mathsf{Restricted}}$ runs successfully on $\phi\to\psi$ and outputs $\mathsf{Pure}(\phi\leq\psi)$, then for any Kripke frame $\mathbb{F}=(W,R)$, $$\mathbb{F}\Vdash\phi\to\psi\mbox{ iff }\mathbb{F}\models\mathsf{Pure}(\phi\to\psi).$$
\end{theorem}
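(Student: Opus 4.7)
The plan is to piggy-back on the soundness of the full algorithm $\mathsf{ALBA}^{\downarrow}$ proved in \cite{Zh21c}, noting that every rewriting rule of $\mathsf{ALBA}^{\downarrow}_{\mathsf{Restricted}}$ is already a rule of $\mathsf{ALBA}^{\downarrow}$. Concretely, I would argue by induction on the length of the execution that at each step the current set of systems of inequalities is equivalent, as a statement about Kripke frames, to the initial inequality $\phi\leq\psi$; it therefore suffices to verify local soundness of each rule in turn and to observe that the final read-off into a pure quasi-inequality $\&\mathsf{Reduce}(\phi_i\leq\psi_i)\Rightarrow\nomi_0\leq\neg\nomi_1$ is a reformulation of the same condition.

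For the preprocessing stage, the distribution rules in Figures~\ref{Figure:distribution:rules} and \ref{Figure:distribution:rules:2} are justified by the pointwise distributivity of $\Diamond,\downarrow x, @_{\nomi}, @_x$ and $\land$ over $\lor$, and dually for $-\land$; the Boolean splitting rules follow immediately from the semantics of $\leq$; and the monotone/antitone variable-elimination rules are sound because, when $\alpha(p)$ is antitone and $\beta(p)$ is monotone in $p$, the valuation $V(p)=\varnothing$ simultaneously maximises $\llbracket\alpha\rrbracket$ and minimises $\llbracket\beta\rrbracket$, so $\alpha(\bot)\leq\beta(\bot)$ is frame-valid iff $\alpha(p)\leq\beta(p)$ is. For the first approximation rule I would argue directly: if $\phi_i\leq\psi_i$ fails on $\mathbb{F}$, pick a valuation and a world $w$ witnessing the failure and extend the valuation by setting $V(\nomi_0)=V(\nomi_1)=\{w\}$, which validates both premises but refutes $\nomi_0\leq\neg\nomi_1$; conversely, since $\nomi_0$ and $\nomi_1$ are forced to name distinct worlds whenever the premises hold, the quasi-inequality is equivalent to $\phi_i\leq\psi_i$ being globally true.

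In the reduction stage, the further splitting rules are immediate; the approximation rules exploit the fact that a nominal $\nomi$ (and similarly a state variable $x$) denotes a single world, so e.g.\ $\nomi\leq\Diamond\alpha$ holds iff the world named by $\nomi$ has a successor satisfying $\alpha$, which may be named by a fresh nominal $\nomj$ without loss of generality; the rules for $@_{\nomi}$, $@_x$ and $\downarrow x$ use that $\mathbb{M},g,w\Vdash\downarrow x.\alpha$ iff $\mathbb{M},g,w\Vdash\alpha[\nomi/x]$ whenever $\nomi$ also denotes $w$, together with the analogous semantic identities for the satisfaction operators; the implication rule is handled by naming witnesses for the antecedent and the negated consequent, mirroring the derived theorem $@_{\nomj}\alpha\land\neg@_{\nomk}\beta\land\neg@_{\nomi}(\nomj\to\neg\nomk)\to\neg @_{\nomi}(\alpha\to\beta)$ listed in Section~\ref{Sec:Prelim}; residuation is Boolean contraposition. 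Freshness of the introduced nominals is crucial in each case, as it allows them to witness the existentials implicit in the original inequality.

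The main obstacle, as usual for ALBA-style algorithms, is the soundness of the two Ackermann rules; this is the classical Ackermann lemma adapted to the hybrid language. For the right-handed version, the premises $\nomi_k\leq p$ force any witnessing valuation to satisfy $V(p)\supseteq\bigcup_{k}V(\nomi_k)$, and since every remaining occurrence of $p$ is of the correct monotonicity ($\beta_i$ positive, $\gamma_j$ negative in $p$), the minimal admissible valuation $V(p)=V(\nomi_1\lor\cdots\lor\nomi_n)$ is the hardest to satisfy, so the whole system is frame-valid iff the result of substituting $\nomi_1\lor\cdots\lor\nomi_n$ for $p$ is. The left-handed rule is dual, using the maximal valuation $V(p)=V(\neg\nomi_1\land\cdots\land\neg\nomi_n)$. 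Finally, the uniform replacement of free state variables by fresh nominals in the output is sound because free state variables are already interpreted by an arbitrary assignment $g$, over which frame validity quantifies universally, and nominals range over arbitrary singletons, so no generality is lost.
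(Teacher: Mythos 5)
Your proposal is correct and follows essentially the same route as the paper, whose entire justification is the one-line observation that every rule of $\mathsf{ALBA}^{\downarrow}_{\mathsf{Restricted}}$ is a rule of $\mathsf{ALBA}^{\downarrow}$, so soundness is inherited from \cite{Zh21c}. You additionally spell out the rule-by-rule verification (local equivalence of each rewrite, the minimal-valuation argument for the Ackermann rules, and the freshness/state-variable-to-nominal step) that the paper leaves to the cited reference, and these details are accurate.
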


\begin{remark}
The special feature of the restricted version of the algorithm $\mathsf{ALBA}^{\downarrow}_{\mathsf{Restricted}}$ compared with $\mathsf{ALBA}^{\downarrow}$ in \cite{Zh21c} is that there is no expanded hybrid language needed in $\mathsf{ALBA}^{\downarrow}_{\mathsf{Restricted}}$, and there is no tense operators needed due to the fact that we do not need most of the residuation rules in $\mathsf{ALBA}^{\downarrow}$ except for $\neg$. Another feature of $\mathsf{ALBA}^{\downarrow}_{\mathsf{Restricted}}$ is that during Stage 2, for each inequality involved, they are of the form $\nomi\leq\gamma$, $x\leq\gamma$, $\gamma\leq\neg\nomi$ or $\gamma\leq\neg x$, which means that they can be equivalently translated into hybrid formulas of the form $@_{\nomi}\gamma$, $@_x\gamma$, $\neg@_\nomi \gamma$ or $\neg@_x \gamma$, as we can see in Section \ref{Sec:Success} and \ref{Sec:Completeness}.
\end{remark}

\section{Success of $\mathsf{ALBA}^{\downarrow}_{\mathsf{Restricted}}$}\label{Sec:Success}

In the present section we show that $\mathsf{ALBA}^{\downarrow}_{\mathsf{Restricted}}$ succeeds on all skeletal Sahlqvist inequalities. The proof is similar to \cite[Section 7]{Zh21c}, but we will stress on the special shape of the inequalities involved in the execution of the algorithm.

\begin{theorem}\label{Thm:Success}
$\mathsf{ALBA}^{\downarrow}_{\mathsf{Restricted}}$ succeeds on all skeletal Sahlqvist formulas.
\end{theorem}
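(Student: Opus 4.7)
\emph{Proof proposal.} The strategy is to follow the execution of $\mathsf{ALBA}^{\downarrow}_{\mathsf{Restricted}}$ stage by stage, showing that the skeletal Sahlqvist structure is preserved until the Ackermann rules become applicable. Since $\mathsf{ALBA}^{\downarrow}_{\mathsf{Restricted}}$ is obtained from $\mathsf{ALBA}^{\downarrow}$ of \cite{Zh21c} by omitting all residuation rules except those for $\neg$ and dispensing with the expanded modal language, the argument of \cite[Section 7]{Zh21c} can be specialised, provided one checks that the narrower skeletal Sahlqvist class remains tractable under the restricted rule set.

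First I would show that preprocessing preserves skeletality: the distribution rules push the skeletal operators past $+\lor$ and $-\land$ so that after exhaustive application these connectives no longer appear below any skeletal operator on any skeletal path, and the splitting rules then peel them away at the top. The monotone/antitone variable elimination rule removes precisely those variables whose occurrences are all non-critical under $\epsilon$. Hence every inequality in $\mathsf{Preprocess}(\phi\leq\psi)$ is $\epsilon$-skeletal Sahlqvist with each remaining variable occurring critically. After first approximation, each resulting system has the form $\{\nomi_0\leq\phi_i,\ \psi_i\leq\neg\nomi_1\}$, and Substage 1 proceeds by structural induction on the length of the remaining skeletal portion of each critical branch. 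Since critical branches are entirely skeletal by hypothesis, and since preprocessing has removed $+\lor$ and $-\land$ from all skeletal paths strictly below the root, the root of each remaining critical branch is labelled by one of $+\land, +\Diamond, +\downarrow x, +@, +\neg$ on the left (or by the dual set, including $-\to$, on the right). For each such label, exactly one of the Substage 1 rules fires and strictly shortens the branch, possibly introducing fresh nominals. The induction terminates when the critical leaf $p$ is reached, yielding a solving inequality $\nomi\leq p$ (if $\epsilon(p)=1$) or $p\leq\neg\nomi$ (if $\epsilon(p)=\partial$), while leaving the non-critical subformulas untouched.

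The final step is Substage 2: each propositional variable $p$ now appears with solving inequalities of the shape demanded by one of the two Ackermann rules (determined by $\epsilon(p)$), while every remaining occurrence of $p$ sits inside a non-critical subformula with the polarity required on the $\beta$-side or the $\gamma$-side. Sequential application of the Ackermann rules then yields a pure system $\mathsf{Pure}(\phi\to\psi)$. The main obstacle is the polarity bookkeeping at the junction between Substage 1 and Substage 2: non-critical subformulas are never decomposed, so one must verify that the sign flips induced by first approximation convert every non-critical occurrence of $p$ into exactly the positive/negative pattern required by the Ackermann preconditions, using that the inequalities remain $\epsilon$-skeletal Sahlqvist throughout Substage 1. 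A secondary subtlety is confirming that each substitution $\alpha[\nomi/x]$ triggered by the $\downarrow x$ approximation rule is safe, which is immediate because $\nomi$ is a fresh nominal and hence cannot be captured by any binder in $\alpha$.
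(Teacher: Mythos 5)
Your proposal follows essentially the same route as the paper: show that preprocessing yields definite $\epsilon$-skeletal Sahlqvist inequalities (no $+\lor$ or $-\land$ on critical branches), that Substage 1 decomposes the skeletal critical branches down to solving inequalities of the form $\nomi\leq p$ or $p\leq\neg\nomi$ while the untouched non-critical residues are $\epsilon^\partial$-uniform, and that this is precisely the shape required by the Ackermann rules. The polarity bookkeeping you flag as the main obstacle is exactly what the paper's Lemma \ref{Lemma:Substage:2:1} records (items 4 and 5, the $\epsilon^\partial$-uniformity of the remaining sides), so the two arguments match in both structure and level of detail.
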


\begin{definition}[Definite $\epsilon$-skeletal Sahlqvist inequality]
Given an order-type $\epsilon$ and $\ast\in\{-,+\}$, the signed generation tree $\ast\phi$ is \emph{definite $\epsilon$-skeletal Sahlqvist} if it is $\epsilon$-skeletal Sahlqvist and there is no $+\lor,-\land$ occurring on an $\epsilon$-critical branch. An inequality $\phi\leq\psi$ is \emph{definite $\epsilon$-skeletal Sahlqvist} if $+\phi$ and $-\psi$ are both definite $\epsilon$-skeletal Sahlqvist.
\end{definition}

\begin{lemma}\label{Lemma:Stage:1}
Let $\{\phi_i\leq\psi_i\}_{i\in I}=\mathsf{Preprocess}(\phi\leq\psi)$ obtained by exhaustive application of the rules in Stage 1 on an input $\epsilon$-skeletal Sahlqvist inequality $\phi\leq\psi$. Then each $\phi_i\leq\psi_i$ is a definite $\epsilon$-skeletal Sahlqvist inequality.
\end{lemma}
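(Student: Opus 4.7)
The plan is to verify two things about Stage 1: first, that each rewrite rule preserves the property of being $\epsilon$-skeletal Sahlqvist; second, that after exhaustive application, every remaining inequality $\phi_i \leq \psi_i$ contains no $+\lor$ node on any $\epsilon$-critical branch of $+\phi_i$ and no $-\land$ node on any $\epsilon$-critical branch of $-\psi_i$.

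For invariance under the rules, I would proceed rule by rule. For each distribution rule in Figures \ref{Figure:distribution:rules} and \ref{Figure:distribution:rules:2}, a direct inspection of the before/after pictures shows that only skeletal connectives are reshuffled and the multiset of leaf occurrences of propositional variables together with their signs is preserved; hence every branch that was $\epsilon$-critical and skeletal before the rewrite is still $\epsilon$-critical and skeletal after. For the splitting rules, breaking $\alpha \lor \beta \leq \gamma$ into $\alpha \leq \gamma$ and $\beta \leq \gamma$ partitions the $\epsilon$-critical branches of $+(\alpha \lor \beta)$ between $+\alpha$ and $+\beta$, each of which inherits the skeletal property (and symmetrically for the right-hand rule). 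For the monotone/antitone variable-elimination rules, any variable that is eliminated is replaced by a constant so its occurrences disappear entirely; the remaining variables still satisfy the $\epsilon$-skeletal condition, interpreting $\epsilon$ as restricted to them.

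For the absence of $+\lor$ on $\epsilon$-critical branches at exhaustion, I would argue by contradiction. Suppose some $+\lor$ lies on an $\epsilon$-critical branch of $+\phi_i$, and consider the highest such occurrence on that branch. If this $+\lor$ is the root of $+\phi_i$, the inequality has the form $\alpha \lor \beta \leq \psi_i$ and the splitting rule still applies, contradicting exhaustion. Otherwise its immediate ancestor on the branch is a skeletal node strictly other than $+\lor$; by the skeletal list in Table \ref{aJoin:and:Meet:Friendly:Table}, this ancestor is one of $+\land, +\Diamond, +\downarrow x, +@_{\nomi}, +@_x, -\neg, -\to$, and each of these patterns triggers a distribution rule from Figure \ref{Figure:distribution:rules}, again contradicting exhaustion. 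The mirror argument handles $-\land$ on $\epsilon$-critical branches of $-\psi_i$ via Figure \ref{Figure:distribution:rules:2}.

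The main obstacle is termination of Stage 1, since distribution rules duplicate subtrees (for instance $+\Diamond(+\lor(\alpha,\beta))$ becomes $+\lor(+\Diamond\alpha,+\Diamond\beta)$) and a naive size measure increases. I would resolve this with a lexicographic well-founded measure such as (sum over $+\lor$/$-\land$ nodes of their distance from the root, total number of connectives, total number of propositional-variable occurrences): distribution rules strictly decrease the first component, splitting rules strictly decrease the second when the first is zero, and variable-elimination rules strictly decrease the third. With termination in hand, the contradiction argument above delivers the definite-$\epsilon$-skeletal Sahlqvist conclusion.
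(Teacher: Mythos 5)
Your overall architecture is sound, and on the first two components it matches what a complete argument needs (the paper itself gives no proof here, deferring entirely to \cite[Lemma 7.3]{Zh21c}, so you are supplying detail the paper omits). The rule-by-rule preservation check is correct, and the exhaustion argument is the right one: the case analysis of which signed skeletal nodes can sit immediately above a $+\lor$ (resp.\ $-\land$) with the correct child sign yields exactly the list $+\land,+\Diamond,+\downarrow x,+@_{\nomi},+@_{x},-\neg,-\to$ (resp.\ its dual), which is precisely the list of connectives the distribution rules push down, so a topmost offending node on a critical branch always triggers either a distribution or a splitting rule.

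The genuine gap is your termination measure. The first lexicographic component, the sum of the depths of all $+\lor$/$-\land$ nodes, does \emph{not} strictly decrease under the binary distribution rules, because those rules duplicate a context that may itself contain such nodes. Concretely, for $+\big((p\lor q)\land(r\lor s)\big)$, distributing $+\land$ over the second $+\lor$ yields $+\big(((p\lor q)\land r)\lor((p\lor q)\land s)\big)$: the sum of depths of $+\lor$ nodes goes from $1+1=2$ to $0+2+2=4$, so your first component increases and the lexicographic argument collapses. (Your second component has a similar issue if read as a total over the whole set of inequalities, since splitting $\alpha\lor\beta\leq\gamma$ duplicates $\gamma$; a multiset ordering over inequalities repairs that one.) Termination of Stage 1 is nevertheless true, but it needs a measure compatible with duplication, e.g.\ a multiplicative polynomial interpretation in the style of standard DNF/CNF-normalization termination proofs: assign atoms the value $2$, interpret $+\lor$ and $-\land$ as $a+b+1$, the unary skeletal connectives as $2a$, and $+\land$, $-\lor$, $-\to$ as $a\cdot b$; each distribution step then strictly decreases the value (e.g.\ $a(b+c+1)=ab+ac+a>ab+ac+1$ since $a\geq 2$). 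With such a measure substituted for your first component, the rest of your proof goes through.
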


\begin{proof}
The proof is essentially the same as in \cite[Lemma 7.3]{Zh21c}. 
\end{proof}

\begin{lemma}\label{Lemma:Substage:2:1}
Given inequalities $\nomi_0\leq\phi_i$ and $\psi_i\leq\neg\nomi_1$ obtained from Stage 1 where $+\phi_i$ and $-\psi_i$ are definite $\epsilon$-skeletal Sahlqvist, by applying the rules in Substage 1 of Stage 2 exhaustively, the inequalities obtained are in one of the following forms:

\begin{enumerate}
\item pure inequalities of the form $\nomi\leq\gamma$, $x\leq\gamma$, $\gamma\leq\neg\nomi$ or $\gamma\leq\neg x$, where $\gamma$ is pure;
\item inequalities of the form $\nomi\leq p$ or $x\leq p$ where $\epsilon(p)=1$;
\item inequalities of the form $p\leq\neg\nomi$ or $p\leq\neg x$ where $\epsilon(p)=\partial$;
\item inequalities of the form $\nomi\leq\delta$ or $x\leq\delta$ where $+\delta$ is $\epsilon^{\partial}$-uniform;
\item inequalities of the form $\delta\leq\neg\nomi$ or $\delta\leq\neg x$ where $-\delta$ is $\epsilon^{\partial}$-uniform.
\end{enumerate}
\end{lemma}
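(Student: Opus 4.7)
The plan is to proceed by simultaneous structural induction on the signed generation trees of $+\phi_i$ and $-\psi_i$, establishing the following claim: for any formula $\alpha$ with $+\alpha$ definite $\epsilon$-skeletal Sahlqvist, exhaustive application of the Substage 1 rules starting from $\nomi \leq \alpha$ or $x \leq \alpha$ yields only inequalities of forms 1, 2, or 4; symmetrically, for $-\alpha$ definite $\epsilon$-skeletal Sahlqvist, starting from $\alpha \leq \neg\nomi$ or $\alpha \leq \neg x$ gives only inequalities of forms 1, 3, or 5. The lemma follows by applying this claim to $\nomi_0 \leq \phi_i$ and $\psi_i \leq \neg\nomi_1$.

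The base case is immediate: if $\alpha$ is a nominal, state variable, or constant, the inequality is pure (form 1); if $\alpha = p$ is a propositional variable, then either $+p$ is $\epsilon$-critical (so $\epsilon(p)=1$), yielding form 2, or $+p$ is non-critical and hence trivially $\epsilon^\partial$-uniform, yielding form 4. For the inductive step, I analyse the outermost connective $\odot$ of $+\alpha$. If $+\alpha$ contains no $\epsilon$-critical branch, then $+\alpha$ is $\epsilon^\partial$-uniform and the inequality is already in form 4; non-decomposable connectives such as $+\Box$, $+\lor$, or $+\to$ leave it so, while any applicable Substage 1 rule yields pure side-inequalities (form 1) and subtree inequalities that remain $\epsilon^\partial$-uniform, to which IH applies. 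If $+\alpha$ has an $\epsilon$-critical branch through the root, then that branch is skeletal; combined with definiteness (which excludes $+\lor$) and the fact that $+\Box, +\to$ are non-skeletal and therefore ruled out on a critical branch, we must have $\odot \in \{+\land, +\Diamond, +\neg, +\downarrow x, +@_{\nomj}, +@_x\}$. For each, the corresponding Substage 1 rule applies: splitting for $+\land$ produces two subtree inequalities, the approximation rules for $+\Diamond$ and the $+@$-variants produce one pure side-inequality and one subtree inequality, the $+\downarrow x$ rule substitutes a fresh nominal for the bound state variable without affecting the propositional variable structure, and residuation for $+\neg$ transfers the inequality to the dual side where the sign-flipped subtree is definite $\epsilon$-skeletal Sahlqvist, invoking the symmetric statement. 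In each case the remaining subtree inequality is handled by IH; the dual induction for $-\psi_i$ is symmetric using the dual Substage 1 rules.

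The main subtle point throughout is verifying that the definite $\epsilon$-skeletal Sahlqvist property is preserved upon restriction to any subtree produced by a rule, which rests on the observation that critical branches of a subtree of $+\alpha$ are exactly those critical branches of $+\alpha$ that pass through that subtree, so neither non-skeletal nodes nor $+\lor$ nodes can newly appear on them. A secondary concern is termination of exhaustive rule application, which is ensured because each applicable rule strictly decreases the total formula complexity on right-hand sides, and once a right-hand side is either a critical variable, a pure formula, or an $\epsilon^\partial$-uniform formula whose outermost connective admits no Substage 1 rule, the process halts in one of the five target forms.
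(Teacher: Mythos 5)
Your proof is correct and takes essentially the same route as the paper's (which is only a brief sketch deferring to \cite[Lemma 7.5]{Zh21c}): an induction showing that each Substage 1 rule strips one root connective, keeps a nominal/state variable (or its negation) on one side, and preserves definite $\epsilon$-skeletal Sahlqvistness of the residual subtrees, with definiteness and skeletality guaranteeing that every connective on a critical branch is decomposable. Your write-up is in fact more explicit than the paper's about the case analysis on the outermost connective and about termination; the only slip is that the $\downarrow x$ approximation rule substitutes the nominal already on the left-hand side rather than a fresh one, which does not affect the argument.
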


\begin{proof}
The proof is similar to \cite[Lemma 7.5]{Zh21c}. The rules in the Substage 1 of Stage 2 treat skeletal nodes in $+\phi_i$ and $-\psi_i$ except $+\lor$, $-\land$. For each rule, without loss of generality, we suppose that we start with an inequality of the form $\nomi\leq\alpha$. By applying the rules in Substage 1 of Stage 2, the inequalities we obtain are either a pure inequality (i.e.\ without propositional variables), or an inequality in which the left-hand side (resp.\ right-hand side) is $\nomi$ or $x$ (resp.\ $\neg\nomi$ or $\neg x$), and the other side of the inequality is a formula $\alpha'$ that is a subformula of $\alpha$, such that $\alpha'$ has one root connective less than $\alpha$. In addition, if $\alpha'$ is on the left-hand side (resp.\ right-hand side) then $-\alpha'$ ($+\alpha'$) is definite $\epsilon$-skeletal Sahlqvist.

By exhaustively applying the rules in the Substage 1 of Stage 2, we eliminate all the skeletal connectives in the $\epsilon$-critical branches, so for non-pure inequalities, they become of form 2, 3, 4 or 5.

In addition, in each inequality, either the left-hand side is $\nomi$ or $x$, or the right-hand side is $\neg\nomi$ or $\neg x$, and for each step of Substage 1 of Stage 2, after the applications of the rules, the resulting inequalities still have this property. Therefore, the final pure inequalities are of the form $\nomi\leq\gamma$, $x\leq\gamma$, $\gamma\leq\neg\nomi$ or $\gamma\leq\neg x$, where $\gamma$ is pure.
\end{proof}

\begin{lemma}\label{Lemma:Substage:2:4}
Suppose we have inequalities of the form in Lemma \ref{Lemma:Substage:2:1}, then the Ackermann lemmas are applicable and all propositional variables can be eliminated, and for each inequality in the system, either the left-hand side is $\nomi$ or $x$, or the right-hand side is $\neg\nomi$ or $\neg x$.
\end{lemma}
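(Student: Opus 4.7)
The approach is to apply the Ackermann rules inductively, eliminating one propositional variable at a time, using the classification from Lemma~\ref{Lemma:Substage:2:1}. First, pick any propositional variable $p$ in the system. The two cases $\epsilon(p)=1$ and $\epsilon(p)=\partial$ are dual; I will describe the former. Collect all inequalities containing $p$: the lower bounds on $p$ come from type 2 inequalities $\nomi\leq p$ or $x\leq p$ (type 3 is vacuous since $\epsilon(p)=1$), and all other occurrences of $p$ live inside type 4 or type 5 inequalities (pure type 1 inequalities don't contain $p$).

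The key polarity check is the following. In a type 4 inequality $\nomi\leq\delta$ (or $x\leq\delta$) with $+\delta$ being $\epsilon^{\partial}$-uniform, the leaf $p$ carries the sign $\epsilon^{\partial}(p)=\partial$, i.e.\ sign $-$, in $+\delta$; hence $\delta$ is negative in $p$ as a formula. In a type 5 inequality $\delta\leq\neg\nomi$ (or $\delta\leq\neg x$) with $-\delta$ being $\epsilon^{\partial}$-uniform, $p$ carries sign $-$ in $-\delta$, which flips to sign $+$ in $+\delta$, so $\delta$ is positive in $p$. These are exactly the shapes required by the right-handed Ackermann rule (namely, the $\gamma_j$'s negative and the $\beta_i$'s positive in $p$), so the rule applies and eliminates $p$ by substituting the disjunction $\nomi_1\lor\ldots\lor\nomi_n$ of its lower bounds. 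The dual analysis, using type 3 upper bounds and the left-handed rule, handles the case $\epsilon(p)=\partial$.

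Next I would verify that after one such application the resulting system still falls under the classification of Lemma~\ref{Lemma:Substage:2:1}, now with one fewer propositional variable. The Ackermann substitution replaces $p$ by a pure term, which does not affect the sign of any other variable $q$ in the signed generation trees of the affected inequalities, so $\epsilon^{\partial}$-uniformity w.r.t.\ the remaining variables is preserved and types 4 and 5 remain types 4 and 5 (or collapse to pure type 1 if $p$ was the only propositional variable occurring in $\delta$). Iterating the argument finitely many times eliminates every propositional variable. The left/right-hand side invariant (nominal or state variable alone on the left, or negated on the right) is preserved trivially, since the Ackermann rule rewrites only the side of the inequality that does not carry the nominal or state variable.

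The main obstacle is the sign bookkeeping: one has to carefully translate the $\epsilon^{\partial}$-uniformity conditions on $+\delta$ or $-\delta$ (formulated at the leaf level) into formula-level positive/negative occurrences of $p$ in the sense required by the statement of the Ackermann rules, and to check that this structural property genuinely survives substitution by a pure term built from nominals and state variables. Once this is in hand the induction on the number of propositional variables is routine.
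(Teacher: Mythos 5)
Your proposal is correct and follows the same route the paper takes: the paper's proof is the single sentence ``Easy observation from the syntactic requirements of the Ackermann lemmas,'' and your argument is precisely that observation written out in full (the polarity check that type-4 inequalities yield $\gamma_j$ negative in $p$ and type-5 inequalities yield $\beta_i$ positive in $p$ when $\epsilon(p)=1$, dually for $\epsilon(p)=\partial$, plus the preservation of the classification under substitution by a pure term). Your more explicit treatment of the iteration and of the left/right-hand-side invariant is a faithful elaboration rather than a different method.
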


\begin{proof}
Easy observation from the syntactic requirements of the Ackermann lemmas.
\end{proof}

\begin{proof}[Proof of Theorem \ref{Thm:Success}]
Assume we have an $\epsilon$-skeletal Sahlqvist formula $\phi\to\psi$ as input. By Lemma \ref{Lemma:Stage:1}, we get a set of definite $\epsilon$-skeletal Sahlqvist inequalities. Then by Lemma \ref{Lemma:Substage:2:1}, we get inequalities as described in Lemma \ref{Lemma:Substage:2:1}. By Lemma \ref{Lemma:Substage:2:4}, the inequalities are ready to apply the Ackermann rules, and therefore we can eliminate all the propositional variables and $\mathsf{ALBA}^{\downarrow}_{\mathsf{Restricted}}$ succeeds on the input.
\end{proof}

\section{Completeness results}\label{Sec:Completeness}
In this section, we will prove that given any skeletal Sahlqvist formula $\phi\to\psi$, the logic $\mathbf{K}_{\mathcal{H}(@,\downarrow)}+(\phi\to\psi)$ is sound and strongly complete with respect to the class of frames defined by $\phi\to\psi$. Our proof strategy is as follows:

\begin{itemize}
\item First of all, we give a translation of each quasi-inequality $\&\mathsf{Reduce}(\phi_i\leq\psi_i)\Rightarrow \nomi_0\leq \neg\nomi_1$ in $\mathsf{Pure}(\phi\to\psi)$ into the language $\mathcal{L}(@,\downarrow)$ which results in a set of $\mathcal{L}(@,\downarrow)$-formulas $\{\pi_i\mid i\in I\}$, and we will show that $\phi\to\psi$ and the set $\Pi:=\{\pi_i\mid i\in I\}$ define the same class of Kripke frames.
\item Secondly, we prove that each $\pi_i$ is provable in $\mathbf{K}_{\mathcal{H}(@,\downarrow)}+(\phi\to\psi)$. Therefore, since $\mathbf{K}_{\mathcal{H}(@,\downarrow)}+\Pi$ is sound and strongly complete with respect to the class of frames defined by $\Pi$ (i.e.\ by $\phi\to\psi$), we get the soundness and strong completeness of $\mathbf{K}_{\mathcal{H}(@,\downarrow)}+(\phi\to\psi)$.
\end{itemize}

\subsection{The translation of inequalities and quasi-inequalities into $\mathcal{L}(@,\downarrow)$-formulas}
The key observation in the success proof in Section \ref{Sec:Success} is that in the systems in Stage 2 and the quasi-inequalities in Stage 3, for each inequality involved, either the left-hand side is $\nomi$ or $x$, or the right-hand side is $\neg\nomi$ or $\neg x$. Indeed, the inequality $\nomi\leq\gamma$ (resp.\ $x\leq\gamma$) is equivalent to the $\mathcal{L}(@,\downarrow)$-formula $@_\nomi \gamma$ (resp.\ $@_x \gamma$), and the inequality $\gamma\leq\neg\nomi$ (resp.\ $\gamma\leq\neg x$) is equivalent to the $\mathcal{L}(@,\downarrow)$-formula $\neg @_\nomi \gamma$ (resp.\ $\neg @_x \gamma$). Therefore, the systems obtained in Stage 2 and the quasi-inequalities obtained in Stage 3 are equivalent to a $\mathcal{L}(@,\downarrow)$-formula.

\begin{definition}[Translation of inequalities and quasi-inequalities into $\mathcal{L}(@,\downarrow)$-formulas]
We define the translation of the inequalities of the form $\nomi\leq\gamma$, $x\leq\gamma$, $\gamma\leq\neg\nomi$, $\gamma\leq\neg x$ into $\mathcal{L}(@,\downarrow)$-formulas as follows:
\begin{itemize}
\item $\mathsf{Tr}(\nomi\leq\gamma):=@_\nomi \gamma$;
\item $\mathsf{Tr}(x\leq\gamma):=@_x \gamma$;
\item $\mathsf{Tr}(\gamma\leq\neg\nomi):=\neg @_\nomi \gamma$;
\item $\mathsf{Tr}(\gamma\leq\neg x):=\neg @_x \gamma$.
\end{itemize}
When an inequality is of more than one of the forms above at the same time, we can take any form appearing in the list since they are equivalent.

Given a quasi-inequality $\mathsf{Quasi}$ of the form $\mathsf{Ineq}_1\ \&\ \ldots\ \&\ \mathsf{Ineq}_n\ \Rightarrow\ \nomi\leq\neg\nomj$ where each of $\mathsf{Ineq}_1, \ldots, \mathsf{Ineq}_n$ is of the form $\nomi\leq\gamma$, $x\leq\gamma$, $\gamma\leq\neg\nomi$ or $\gamma\leq\neg x$, we define  
$$\mathsf{Tr}(\mathsf{Quasi}):=\mathsf{Tr}(\mathsf{Ineq}_1)\land\ldots\land\mathsf{Tr}(\mathsf{Ineq}_n)\to\neg @_{\nomi}\nomj$$

Given a set $\mathsf{QuasiSet}$ of quasi-inequalities of the form above, we define $$\mathsf{Tr}(\mathsf{QuasiSet}):=\bigwedge_{\mathsf{Quasi}\in\mathsf{QuasiSet}}\mathsf{Tr}(\mathsf{Quasi}).$$
\end{definition}

\begin{proposition}\label{Prop:Translation}
For each inequality $\mathsf{Ineq}$ of the form $\nomi\leq\gamma$, $x\leq\gamma$, $\gamma\leq\neg\nomi$ or $\gamma\leq\neg x$, each quasi-inequality $\mathsf{Quasi}$ of the form described in the definition above, we have that for any Kripke model $\mathbb{M}$, any assignment $g$ on $\mathbb{M}$, 
$$\mathbb{M},g\Vdash\mathsf{Ineq}\mbox{ iff }\mathbb{M},g\Vdash\mathsf{Tr}(\mathsf{Ineq})$$
$$\mathbb{M},g\Vdash\mathsf{Quasi}\mbox{ iff }\mathbb{M},g\Vdash\mathsf{Tr}(\mathsf{Quasi}).$$
$$\mathbb{M},g\Vdash\mathsf{QuasiSet}\mbox{ iff }\mathbb{M},g\Vdash\mathsf{Tr}(\mathsf{QuasiSet}).$$
\end{proposition}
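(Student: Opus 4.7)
The plan is to establish the three equivalences in order, starting with the single inequality, lifting to a single quasi-inequality, and finishing with the set. The proof is fundamentally a bookkeeping exercise resting on two semantic observations that I will isolate first and then reuse throughout.

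The first observation is a rigidity lemma: for any term $t\in\{\nomi,x\}$ and formula $\gamma$, the truth value of $@_t\gamma$ at $(\mathbb{M},g,w)$ does not depend on $w$, since by the semantic clause it reduces to the truth of $\gamma$ at the unique world denoted by $t$. Consequently $\mathbb{M},g\Vdash@_t\gamma$ globally iff $\mathbb{M},g,w\Vdash@_t\gamma$ for some (equivalently every) $w$. The second observation is that nominals and state variables are singletons: $\mathbb{M},g,w\Vdash\nomi$ iff $w=V(\nomi)$, and $\mathbb{M},g,w\Vdash x$ iff $w=g(x)$.

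For part 1 I would treat the four cases. For $\nomi\leq\gamma$: unfolding the inequality semantics and using the singleton property of $\nomi$, $\mathbb{M},g\Vdash\nomi\leq\gamma$ iff $\mathbb{M},g,V(\nomi)\Vdash\gamma$, which by rigidity is iff $\mathbb{M},g\Vdash@_\nomi\gamma$. The case $x\leq\gamma$ is analogous using $g(x)$ in place of $V(\nomi)$. For $\gamma\leq\neg\nomi$, one uses the contrapositive: the inequality says that $\gamma$ fails at $V(\nomi)$, which is exactly $\mathbb{M},g\Vdash\neg@_\nomi\gamma$ by rigidity and the clauses for $\neg$ and $@$; the case $\gamma\leq\neg x$ is analogous.

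For part 2, let $\mathsf{Quasi}$ be $\mathsf{Ineq}_1\ \&\ \ldots\ \&\ \mathsf{Ineq}_n\ \Rightarrow\ \nomi\leq\neg\nomj$. By the semantics of quasi-inequalities, $\mathbb{M},g\Vdash\mathsf{Quasi}$ iff (globally) all $\mathsf{Ineq}_k$ being satisfied at $(\mathbb{M},g)$ forces $\mathbb{M},g\Vdash\nomi\leq\neg\nomj$. By part 1 and the equivalence $@_\nomi\neg\nomj\leftrightarrow\neg@_\nomi\nomj$ (an instance of Selfdual), this is equivalent to the meta-level implication: $\mathbb{M},g\Vdash\mathsf{Tr}(\mathsf{Ineq}_k)$ for every $k$ implies $\mathbb{M},g\Vdash\neg@_\nomi\nomj$. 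The subtle step, where the main care is needed, is to convert this meta-level implication into the global truth of the object-level implication $\bigwedge_k\mathsf{Tr}(\mathsf{Ineq}_k)\to\neg@_\nomi\nomj$. This is where the rigidity observation pays off: each conjunct in the antecedent and the consequent has the form $@_t\gamma$ or $\neg@_t\gamma$, hence has world-independent truth value, and therefore the global truth of the implication coincides with the implication between global truth values. Finally, part 3 is immediate, since global truth commutes with conjunction: $\mathbb{M},g\Vdash\bigwedge_{\mathsf{Quasi}\in\mathsf{QuasiSet}}\mathsf{Tr}(\mathsf{Quasi})$ iff $\mathbb{M},g\Vdash\mathsf{Tr}(\mathsf{Quasi})$ for each $\mathsf{Quasi}$, which by part 2 matches the meaning of $\mathbb{M},g\Vdash\mathsf{QuasiSet}$. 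The only genuine obstacle is the rigidity step in part 2; once that is in hand, the rest is a direct unfolding of definitions.
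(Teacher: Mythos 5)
Your proof is correct and follows exactly the route the paper intends: the paper states this proposition without proof, treating it as immediate from the observation (made just before the definition of $\mathsf{Tr}$) that an inequality with a nominal or state variable on one side is equivalent to the corresponding $@$-formula. Your write-up supplies the details the paper omits, and in particular correctly isolates the one point that actually needs care — the world-independence of $@_t\gamma$, which is what lets the meta-level implication between globally true premises collapse to global truth of the object-level implication $\mathsf{Tr}(\mathsf{Quasi})$.
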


\subsection{Provability of the translations}

\begin{lemma}
Given an input skeletal Sahlqvist formula $\phi\to\psi$, during Stage 1, for each inequality $\theta\leq\chi$ produced by the algorithm, we have $\vdash_{\phi\to\psi}\theta\to\chi$.
\end{lemma}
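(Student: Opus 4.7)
The plan is to induct on the number of Stage 1 rule applications producing $\theta\leq\chi$ from the original inequality $\phi\leq\psi$. The base case is immediate: $\phi\to\psi$ is the axiom of $\mathbf{K}_{\mathcal{H}(@,\downarrow)}+(\phi\to\psi)$, so $\vdash_{\phi\to\psi}\phi\to\psi$ is trivial. The inductive step then splits according to the preprocessing rule just applied.

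For the distribution rules in Figures \ref{Figure:distribution:rules}--\ref{Figure:distribution:rules:2}, each rule replaces a subformula $\gamma$ with a provably equivalent $\delta$; once $\vdash_{\phi\to\psi}\gamma\leftrightarrow\delta$ is in hand, (Res) lifts the equivalence to the enclosing inequality, giving provability of the rewritten $\theta'\to\chi'$. The required equivalences fall into natural families: $\Diamond(\alpha\lor\beta)\leftrightarrow\Diamond\alpha\lor\Diamond\beta$ and its $\Box$-dual follow from (K), (Dual), (Nec) and (CT); the $@_{\nomi}$- and $@_{x}$-distributions follow from (K$_{@}$), (Selfdual) and (Nec$_{@}$); and the De Morgan laws, the distributivity of $\land$ over $\lor$, and $((\alpha\lor\beta)\to\gamma)\leftrightarrow(\alpha\to\gamma)\land(\beta\to\gamma)$ are classical tautologies covered by (CT). For the splitting rules, each conclusion follows from the inductive hypothesis via a single classical tautology and (MP). For the monotone and antitone variable-elimination rules, the inductive hypothesis gives $\vdash_{\phi\to\psi}\alpha(p)\to\beta(p)$; since the substitutions $p\mapsto\bot$ and $p\mapsto\top$ introduce no free state variables, they are vacuously safe, so (SB) immediately yields $\vdash_{\phi\to\psi}\alpha(\bot)\to\beta(\bot)$ (respectively $\vdash_{\phi\to\psi}\beta(\top)\to\alpha(\top)$).

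The main obstacle I anticipate is the $\downarrow x$-over-$\lor$ distribution (and its $-\land$ dual): establishing $\vdash\downarrow x.(\alpha\lor\beta)\leftrightarrow\downarrow x.\alpha\lor\downarrow x.\beta$ from the axioms. The axiom (DA) provides only the $@$-guarded form $@_{\nomi}(\downarrow x.\theta\leftrightarrow\theta[\nomi/x])$, so one must pass through a fresh nominal. Concretely, for fresh $\nomi$ I would chain two instances of (DA), the already-established $@_{\nomi}$-distribution over $\lor$, and the syntactic identity $(\alpha\lor\beta)[\nomi/x]=\alpha[\nomi/x]\lor\beta[\nomi/x]$ to derive $\vdash @_{\nomi}\bigl(\downarrow x.(\alpha\lor\beta)\leftrightarrow\downarrow x.\alpha\lor\downarrow x.\beta\bigr)$, then strip the guard by applying (SB) to replace $\nomi$ by a fresh state variable $y$, followed by (Nec$_{\downarrow}$) and (Name$_{\downarrow}$) to recover the unguarded equivalence. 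I also remark that the first-approximation step produces pairs $\nomi_0\leq\phi_i$, $\psi_i\leq\neg\nomi_1$ that are not individually provable from $\phi\to\psi$; these appear only as joint antecedents of a quasi-inequality in Stage 2, so they fall outside the scope of the present lemma as stated.
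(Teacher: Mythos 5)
Your proposal is correct and follows essentially the same route as the paper: induction on the Stage 1 rule applications, provable equivalences lifted by (Res) for the distribution rules, classical propositional reasoning for the splitting rules, and (safe) uniform substitution of $\bot/\top$ for the monotone and antitone variable-elimination rules. You additionally supply a derivation of $\downarrow x.(\alpha\lor\beta)\leftrightarrow(\downarrow x.\alpha\lor\downarrow x.\beta)$ from (DA) via a fresh nominal and guard-stripping, which the paper merely asserts as one of its listed equivalences, and your closing remark that the first-approximation outputs are not individually provable matches the paper's treatment, since its proof covers only the preprocessing rules and the first approximation is handled separately at the level of quasi-inequalities.
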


\begin{proof}
We prove by induction on the algorithm steps in Stage 1 that for each inequality $\theta\leq\chi$ produced by the algorithm, we have $\vdash_{\phi\to\psi}\theta\to\chi$.

\begin{itemize}
\item For the basic step, obviously $\vdash_{\phi\to\psi}\phi\to\psi$. 
\item For the distribution rules, we have that the following equivalences are provable in $\mathbf{K}_{\mathcal{H}(@,\downarrow)}$ (therefore in $\mathbf{K}_{\mathcal{H}(@,\downarrow)}+(\phi\to\psi)$), thus by the (Res) rule, for the inequality $\theta\leq\chi$ obtained by the distribution rule, we have $\vdash_{\phi\to\psi}\theta\to\chi$.

\begin{itemize}
\item $\Diamond(\alpha\lor\beta)\leftrightarrow\Diamond\alpha\lor\Diamond\beta$;
\item $\neg(\alpha\lor\beta)\leftrightarrow\neg\alpha\land\neg\beta$;
\item $(\alpha\lor\beta)\land\gamma\leftrightarrow(\alpha\land\gamma)\lor(\beta\land\gamma)$;
\item $\alpha\land(\beta\lor\gamma)\leftrightarrow(\alpha\land\beta)\lor(\alpha\land\gamma)$;
\item $\downarrow x.(\alpha\lor\beta)\leftrightarrow (\downarrow x.\alpha\lor\downarrow x.\beta)$;
\item $@_{\nomi}(\alpha\lor\beta)\leftrightarrow (@_{\nomi}\alpha\lor @_{\nomi}\beta)$;
\item $@_{x}(\alpha\lor\beta)\leftrightarrow (@_{x}\alpha\lor @_{x}\beta)$;
\item $((\alpha\lor\beta)\to\gamma)\leftrightarrow((\alpha\to\gamma)\land(\beta\to\gamma))$;
\item $\Box(\alpha\land\beta)\leftrightarrow\Box\alpha\land\Box\beta$;
\item $\neg(\alpha\land\beta)\leftrightarrow\neg\alpha\lor\neg\beta$;
\item $(\alpha\land\beta)\lor\gamma\leftrightarrow(\alpha\lor\gamma)\land(\beta\lor\gamma)$;
\item $\alpha\lor(\beta\land\gamma)\leftrightarrow(\alpha\lor\beta)\land(\alpha\lor\gamma)$;
\item $\downarrow x.(\alpha\land\beta)\leftrightarrow (\downarrow x.\alpha\land \downarrow x.\beta)$;
\item $@_{\nomi}(\alpha\land\beta)\leftrightarrow (@_{\nomi}\alpha\land @_{\nomi}\beta)$;
\item $@_{x}(\alpha\land\beta)\leftrightarrow (@_{x}\alpha\land @_{x}\beta)$;
\item $(\alpha\to\beta\land\gamma)\leftrightarrow(\alpha\to\beta)\land(\alpha\to\gamma)$.
\end{itemize}

\item For the splitting rules, suppose we have $\alpha\lor\beta\leq\gamma$. By induction hypothesis, we have $\vdash_{\phi\to\psi}\alpha\lor\beta\to\gamma$. By classical propositional logic, we have that $\vdash_{\phi\to\psi}\alpha\to\gamma$ and $\vdash_{\phi\to\psi}\beta\to\gamma$. Similarly we have that from $\vdash_{\phi\to\psi}\alpha\to\beta\land\gamma$ we can get $\vdash_{\phi\to\psi}\alpha\to\beta$ and $\vdash_{\phi\to\psi}\alpha\to\gamma$.

\item For the monotone and antitone variable-elimination rules, suppose we have $\alpha(p)\leq\beta(p)$. By induction hypothesis, we have $\vdash_{\phi\to\psi}\alpha(p)\to\beta(p)$. By uniform substitution, we have $\vdash_{\phi\to\psi}\alpha(\bot)\to\beta(\bot)$. Similarly, from $\vdash_{\phi\to\psi}\beta(p)\to\alpha(p)$ we can get $\vdash_{\phi\to\psi}\beta(\top)\to\alpha(\top)$.
\end{itemize}
\end{proof}

\begin{corollary}\label{Cor:Stage:1:Before:Fir:Approx}
Suppose that for the input formula $\phi\to\psi$, in Stage 1, before the first-approximation rule, we get a set of inequalities $\{\theta_1\leq\chi_1, \ldots, \theta_n\leq\chi_n\}$, then $\vdash_{\phi\to\psi}\theta_i\to\chi_i$ for $1\leq i\leq n$.
\end{corollary}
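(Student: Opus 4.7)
The plan is to obtain this corollary as an immediate consequence of the preceding lemma. Recall that the lemma establishes, by induction on the algorithm steps in Stage 1, that every inequality $\theta\leq\chi$ produced at any point during the execution of Stage 1 satisfies $\vdash_{\phi\to\psi}\theta\to\chi$. The set $\{\theta_1\leq\chi_1,\ldots,\theta_n\leq\chi_n\}$ in the statement of the corollary is, by construction, just the collection of inequalities on the worklist of the algorithm at the moment just before the first-approximation rule is applied. This set is precisely $\mathsf{Preprocess}(\phi\leq\psi)$ from the algorithm description in Section \ref{Sec:ALBA}.

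Concretely, I would argue as follows. Fix $1\leq i\leq n$. The inequality $\theta_i\leq\chi_i$ belongs to the output of Stage 1 (prior to first approximation), so it is generated by some finite sequence of applications of distribution, splitting, and monotone/antitone variable-elimination rules, starting from the initial inequality $\phi\leq\psi$. Applying the preceding lemma to this particular inequality yields $\vdash_{\phi\to\psi}\theta_i\to\chi_i$ directly. Ranging over all $i\in\{1,\ldots,n\}$ gives the conclusion.

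Since the corollary is a direct specialization of the lemma, I do not anticipate any obstacle: no new induction, case analysis, or additional derivation in $\mathbf{K}_{\mathcal{H}(@,\downarrow)}$ is required. The only mild point to check is that the set of inequalities obtained just before the first-approximation rule is indeed indexed by a finite set $I$, so that one may speak of finitely many $\theta_i\leq\chi_i$; this is guaranteed by the exhaustive (but terminating) application of the Stage 1 rules, exactly as noted in the definition of $\mathsf{Preprocess}(\phi\leq\psi)$.
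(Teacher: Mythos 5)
Your proposal is correct and matches the paper's intent exactly: the corollary is an immediate specialization of the preceding lemma (which the paper proves by induction on the Stage 1 rule applications), and the paper accordingly gives no separate proof. Your additional remark on finiteness of the set of inequalities is a harmless bonus, already guaranteed by the definition of $\mathsf{Preprocess}(\phi\leq\psi)$.
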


\begin{lemma}
For each $\theta_i\leq\chi_i$ before the first-approximation rule in Stage 1, after the first-approximation rule, we get the system $\mathsf{Sys}_i:=\{\nomi_0\leq\theta_i, \chi_i\leq\neg\nomi_1\}$, which corresponds to the quasi-inequality $\mathsf{Quasi}_i:=\nomi_0\leq\theta_i\ \&\ \chi_i\leq\neg\nomi_1\ \Rightarrow\ \nomi_0\leq\neg\nomi_1$, then we have that $\vdash_{\phi\to\psi}\mathsf{Tr}(\mathsf{Quasi}_i)$.
\end{lemma}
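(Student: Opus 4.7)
The plan is to unfold the translation and reduce the claim to a short derivation in $\mathbf{K}_{\mathcal{H}(@,\downarrow)}+(\phi\to\psi)$ using the derived rules collected in Section~\ref{Sec:Prelim}. By definition of $\mathsf{Tr}$, the formula $\mathsf{Tr}(\mathsf{Quasi}_i)$ unfolds to
\[
@_{\nomi_0}\theta_i \land \neg@_{\nomi_1}\chi_i \to \neg@_{\nomi_0}\nomi_1,
\]
so equivalently (by classical propositional reasoning) it suffices to prove
\[
@_{\nomi_0}\theta_i \land @_{\nomi_0}\nomi_1 \to @_{\nomi_1}\chi_i
\]
inside $\mathbf{K}_{\mathcal{H}(@,\downarrow)}+(\phi\to\psi)$.

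First I would invoke Corollary~\ref{Cor:Stage:1:Before:Fir:Approx} to obtain $\vdash_{\phi\to\psi}\theta_i\to\chi_i$. Applying (Nec$_{@}$) with $\nomi_0$ gives $\vdash_{\phi\to\psi}@_{\nomi_0}(\theta_i\to\chi_i)$, and then (K$_{@}$) plus (MP) yields $\vdash_{\phi\to\psi}@_{\nomi_0}\theta_i\to@_{\nomi_0}\chi_i$. Combined with the antecedent $@_{\nomi_0}\theta_i$, this delivers $@_{\nomi_0}\chi_i$.

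Next I would convert $@_{\nomi_0}\nomi_1$ to $@_{\nomi_1}\nomi_0$ by (Sym), and then instantiate the derived theorem (Trans), namely $\vdash_{\phi\to\psi}@_{\nomj}\beta\land@_{\nomi}\nomj\to@_{\nomi}\beta$, with $\nomj:=\nomi_0$, $\nomi:=\nomi_1$, and $\beta:=\chi_i$, to obtain
\[
@_{\nomi_0}\chi_i \land @_{\nomi_1}\nomi_0 \to @_{\nomi_1}\chi_i.
\]
Modus ponens on the two premises just derived produces $@_{\nomi_1}\chi_i$, which is exactly the consequent needed in the contrapositive form. A final pass through classical propositional logic converts this back to the original implication $@_{\nomi_0}\theta_i \land \neg@_{\nomi_1}\chi_i \to \neg@_{\nomi_0}\nomi_1$, i.e.\ $\mathsf{Tr}(\mathsf{Quasi}_i)$, completing the proof.

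There is no real obstacle here; the argument is essentially the proof-theoretic content of the soundness of the first-approximation rule. The only minor point worth care is the direction in which (Sym) and (Trans) are applied, since the order of the nominal indices matters in (Trans); I would therefore explicitly spell out the instantiation, as above, rather than leaving the indexing to the reader.
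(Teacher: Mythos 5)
Your proposal is correct and follows essentially the same derivation as the paper: unfold $\mathsf{Tr}(\mathsf{Quasi}_i)$ to $@_{\nomi_0}\theta_i\land\neg@_{\nomi_1}\chi_i\to\neg@_{\nomi_0}\nomi_1$, obtain $\theta_i\to\chi_i$ from Corollary~\ref{Cor:Stage:1:Before:Fir:Approx}, distribute it under $@_{\nomi_0}$ via (Nec$_@$), (K$_@$) and (MP), and then combine (Sym) and (Trans) with classical propositional reasoning. The only (immaterial) difference is that you apply (Trans) to $\chi_i$ after using the implication, whereas the paper applies it to $\theta_i$ first.
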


\begin{proof}
By Corollary \ref{Cor:Stage:1:Before:Fir:Approx}, we have that $\vdash_{\phi\to\psi}\theta_i\to\chi_i$. Therefore we have the following proof in $\mathbf{K}_{\mathcal{H}(@,\downarrow)}+(\phi\to\psi)$\footnote{In the proof, CPC means using classical propositional logic.}:
\begin{center}
\begin{tabular}{l l l}
1 & $\vdash_{\phi\to\psi}\theta_i\to\chi_i$ & (Assumption)\\

2 & $\vdash_{\phi\to\psi}@_{\nomi_0}(\theta_i\to\chi_i)$ & (Nec$_{@}$, 1)\\

3 & $\vdash_{\phi\to\psi}@_{\nomi_0}(\theta_i\to\chi_i)\to(@_{\nomi_0}\theta_i\to @_{\nomi_0}\chi_i)$ & (K$_{@}$)\\

4 & $\vdash_{\phi\to\psi}@_{\nomi_0}\theta_i\to @_{\nomi_0}\chi_i$ & (MP, 2, 3)\\

5 & $\vdash_{\phi\to\psi}@_{\nomi_0}\theta_i\land @_{\nomi_1}\nomi_0\to @_{\nomi_1}\theta_i$ & (Trans)\\

6 & $\vdash_{\phi\to\psi}@_{\nomi_0}\theta_i\land @_{\nomi_1}\nomi_0\to @_{\nomi_1}\chi_i$ & (CPC, 4, 5)\\

7 & $\vdash_{\phi\to\psi}@_{\nomi_0}\nomi_1\to@_{\nomi_1}\nomi_0$ & (Sym)\\

8 & $\vdash_{\phi\to\psi}@_{\nomi_0}\theta_i\land @_{\nomi_0}\nomi_1\to @_{\nomi_1}\chi_i$ & (CPC, 6, 7)\\

9 & $\vdash_{\phi\to\psi}@_{\nomi_0}\theta_i\land\neg @_{\nomi_1}\chi_i\to\neg @_{\nomi_0}\nomi_1$ & (CPC, 8)\\
\end{tabular}
\end{center}
Therefore $\vdash_{\phi\to\psi}\mathsf{Tr}(\mathsf{Quasi}_i)$.
\end{proof}

Now we fix a quasi-inequality $\mathsf{Quasi}_i:=\nomi_0\leq\theta_i\ \&\ \chi_i\leq\neg\nomi_1\ \Rightarrow\ \nomi_0\leq\neg\nomi_1$. We will prove that for each system $\mathsf{Sys}$ obtained during the Stage 2, $\mathsf{Tr}(\&\mathsf{Sys}\to\nomi_0\leq\neg\nomi_1)$ is provable.

\begin{lemma}
Given a quasi-inequality $\mathsf{Quasi}_i:=\nomi_0\leq\theta_i\ \&\ \chi_i\leq\neg\nomi_1\ \Rightarrow\ \nomi_0\leq\neg\nomi_1$ obtained in Stage 1, for each system $\mathsf{Sys}$ obtained during the Stage 2, $\vdash_{\phi\to\psi}\mathsf{Tr}(\&\mathsf{Sys}\to\nomi_0\leq\neg\nomi_1)$.
\end{lemma}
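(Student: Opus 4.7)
The plan is to induct on the number of Stage 2 rule applications applied to the initial quasi-inequality $\mathsf{Quasi}_i$. The base case (zero steps) is exactly the content of the previous lemma. For the inductive step, I would do a case analysis on the rule applied, and in each case derive the translation of the new system from that of the old using the derived theorems of Section \ref{Sec:Prelim} together with (Res) and classical propositional logic.

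For the splitting and residuation rules, the translations of the old and new systems are propositionally equivalent modulo the derived theorems $@_\nomi(\beta\land\gamma)\leftrightarrow @_\nomi\beta\land @_\nomi\gamma$, $\neg @_\nomi(\alpha\lor\beta)\leftrightarrow\neg @_\nomi\alpha\land\neg @_\nomi\beta$, and (Selfdual), together with their $@_x$-analogues obtained by (SB). For the approximation rules, I would match each rule with the corresponding derived theorem: the $\Diamond$-rule with $@_\nomj\alpha\land @_\nomi\Diamond\nomj\to @_\nomi\Diamond\alpha$; the $@_\nomj$-rule with $@_\nomi @_\nomj\alpha\leftrightarrow @_\nomj\alpha$; the $\downarrow x$-rule with $@_\nomi\downarrow x.\alpha\leftrightarrow @_\nomi\alpha[\nomi/x]$; the $\to$-rule with the listed $@_\nomj\alpha\land\neg @_\nomk\beta\land\neg @_\nomi(\nomj\to\neg\nomk)\to\neg @_\nomi(\alpha\to\beta)$; and the $\Box$-rule with the dual of the $\Diamond$-theorem via (Selfdual). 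In every case the implication from the new translation to the old is an instance of the relevant theorem, so the induction hypothesis carries over by modus ponens and (Res), while freshly introduced nominals remain free parameters that do not interfere.

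The Ackermann rule is the most substantial step and turns on (SB). For the right-handed rule, I would take the induction hypothesis (a theorem of $\mathbf{K}_{\mathcal{H}(@,\downarrow)}+(\phi\to\psi)$ containing $p$) and apply (SB) to substitute $p$ by $\mathbf{n}:=\nomi_1\lor\ldots\lor\nomi_n$. The substituted theorem will contain conjuncts $@_{\nomi_k}\mathbf{n}$ in place of $@_{\nomi_k}p$; these are instances of the listed $\vdash @_{\nomi_k}(\nomi_1\lor\ldots\lor\nomi_n)$ and can be discharged by classical propositional logic, leaving exactly the translation of the new system. The left-handed rule is symmetric, using $\mathbf{n}':=\neg\nomi_1\land\ldots\land\neg\nomi_n$ and the dual theorem $\vdash\neg @_{\nomi_k}(\neg\nomi_1\land\ldots\land\neg\nomi_n)$ to discharge the conjuncts arising from the translations of $p\leq\neg\nomi_k$.

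The main obstacle is the safety condition on (SB) at the Ackermann step: the substitution $p\mapsto\mathbf{n}$ must not bring any free state variable into the scope of a $\downarrow x$-binder occurring in the induction hypothesis. Since the Ackermann stage treats nominals and state variables uniformly, $\mathbf{n}$ may in principle contain state variables, so I would need to verify that the state variables appearing among the $\nomi_k$ do not clash with binders remaining in the $\gamma_j$ or $\beta_l$; this should follow from the freshness discipline of the approximation rules, and in the worst case can be enforced by alpha-renaming of bound state variables in the induction hypothesis before applying (SB).
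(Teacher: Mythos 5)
Your proposal matches the paper's proof essentially step for step: the same induction on the Stage 2 rule applications with the previous lemma as base case, the same matching of each splitting, approximation and residuation rule to the corresponding derived theorem of $\mathbf{K}_{\mathcal{H}(@,\downarrow)}$, and the same treatment of the Ackermann rules by substituting $(\nomi_1\lor\ldots\lor\nomi_n)/p$ (resp.\ $(\neg\nomi_1\land\ldots\land\neg\nomi_n)/p$) and discharging the resulting conjuncts via $\vdash @_{\nomi_k}(\nomi_1\lor\ldots\lor\nomi_n)$ and its dual. Your additional attention to the safety condition on (SB) at the Ackermann step is a legitimate point of care that the paper passes over silently, but it does not alter the argument.
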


\begin{proof}
First of all, since in each inequality in the system, either the left-hand side is a nominal/state variable, or the right-hand side is the negation of a nominal/state variable, $\&\mathsf{Sys}\to\nomi_0\leq\neg\nomi_1$ can be translated.

We prove by induction on the algorithm steps in Stage 2 that for each system $\mathsf{Sys}$ obtained during the Stage 2, $\vdash_{\phi\to\psi}\mathsf{Tr}(\&\mathsf{Sys}\to\nomi_0\leq\neg\nomi_1)$ is provable.

\begin{itemize}
\item For the basic step, obviously $\vdash_{\phi\to\psi}\mathsf{Tr}(\mathsf{Quasi}_i)$.
\item For the splitting rules, it suffices to prove that from $\vdash_{\phi\to\psi}@_\nomi(\beta\land\gamma)\land\alpha\to\neg@_{\nomi_0}\nomi_1$ one can get $\vdash_{\phi\to\psi}@_\nomi\beta\land@_\nomi\gamma\land\alpha\to\neg@_{\nomi_0}\nomi_1$ and from $\vdash_{\phi\to\psi}\neg@_\nomi(\alpha\lor\beta)\land\gamma\to\neg@_{\nomi_0}\nomi_1$ one can get $\vdash_{\phi\to\psi}\neg@_\nomi\alpha\land\neg@_\nomi\beta\land\gamma\to\neg@_{\nomi_0}\nomi_1$, which follows by the facts that $\vdash_{\phi\to\psi}@_\nomi(\beta\land\gamma)\leftrightarrow@_\nomi\beta\land@_\nomi\gamma$ and $\vdash_{\phi\to\psi}\neg@_\nomi(\alpha\lor\beta)\leftrightarrow\neg@_\nomi\alpha\land\neg@_\nomi\beta$.

\item For the approximation rule from $\nomi\leq\Diamond\alpha$ to $\nomj\leq\alpha$ and $\nomi\leq\Diamond\nomj$, it suffices to prove that from $\vdash_{\phi\to\psi}@_{\nomi}\Diamond\alpha\land\gamma\to\neg@_{\nomi_0}\nomi_1$ one can get $\vdash_{\phi\to\psi}@_{\nomj}\alpha\land@_{\nomi}\Diamond\nomj\land\gamma\to\neg@_{\nomi_0}\nomi_1$, which follows by the fact that $\vdash_{\phi\to\psi}@_{\nomj}\alpha\land@_{\nomi}\Diamond\nomj\to@_{\nomi}\Diamond\alpha$. 

For the approximation rule from $x\leq\Diamond\alpha$ to $\nomj\leq\alpha$ and $x\leq\Diamond\nomj$, the proof is similar. 
\item For the approximation rule from $\Box\alpha\leq\neg\nomi$ to $\alpha\leq\neg\nomj$ and $\Box\neg\nomj\leq\neg\nomi$, it suffices to prove that from $\vdash_{\phi\to\psi}\neg@_{\nomi}\Box\alpha\land\gamma\to\neg@_{\nomi_0}\nomi_1$ one can get $\vdash_{\phi\to\psi}\neg@_{\nomi}\Box\neg\nomj\land\neg @_{\nomj}\alpha\land\gamma\to\neg@_{\nomi_0}\nomi_1$, which follows from the fact that $\vdash_{\phi\to\psi}\neg@_{\nomi}\Box\neg\nomj\land\neg @_{\nomj}\alpha\to\neg@_{\nomi}\Box\alpha$ (i.e.\ $\vdash_{\phi\to\psi}@_{\nomi}\Diamond\nomj\land @_{\nomj}\neg\alpha\to @_{\nomi}\Diamond\neg\alpha$).

For the approximation rule from $\Box\alpha\leq\neg x$ to $\alpha\leq\neg\nomj$ and $\Box\neg\nomj\leq\neg x$, the proof is similar.
\item For the approximation rule from $\nomi\leq @_{\nomj}\alpha$ to $\nomj\leq\alpha$, it suffices to prove that from $\vdash_{\phi\to\psi}@_{\nomi}@_{\nomj}\alpha\land\gamma\to\neg@_{\nomi_0}\nomi_1$ one can get $\vdash_{\phi\to\psi}@_{\nomj}\alpha\land\gamma\to\neg@_{\nomi_0}\nomi_1$, which follows from the fact that $\vdash_{\phi\to\psi}@_{\nomj}\alpha\leftrightarrow@_{\nomi}@_{\nomj}\alpha$.

For the approximation rule from $x\leq @_{\nomj}\alpha$ to $\nomj\leq\alpha$, from $\nomi\leq @_x \alpha$ to $x\leq\alpha$, from $y\leq @_x \alpha$ to $x\leq\alpha$, the proof is similar.
\item For the approximation rule from $@_{\nomj}\alpha\leq\neg\nomi$ to $\alpha\leq\neg\nomj$, it suffices to prove that from $\vdash_{\phi\to\psi}\neg @_{\nomi}@_{\nomj}\alpha\land\gamma\to\neg@_{\nomi_0}\nomi_1$ one can get $\vdash_{\phi\to\psi}\neg @_{\nomj}\alpha\land\gamma\to\neg@_{\nomi_0}\nomi_1$, which follows from the fact that $\vdash_{\phi\to\psi}@_{\nomj}\alpha\leftrightarrow @_{\nomi}@_{\nomj}\alpha$. 

For the approximation rule from $@_{\nomj}\alpha\leq\neg x$ to $\alpha\leq\neg\nomj$, from $@_{x}\alpha\leq\neg \nomi$ to $\alpha\leq\neg x$, from $@_{x}\alpha\leq\neg y$ to $\alpha\leq\neg x$, the proof is similar.

\item For the approximation rule from $\nomi\leq\downarrow x.\alpha$ to $\nomi\leq\alpha[\nomi/x]$, it suffices to prove that from $\vdash_{\phi\to\psi}@_{\nomi}\downarrow x.\alpha\land\gamma\to\neg@_{\nomi_0}\nomi_1$ one can get $\vdash_{\phi\to\psi}@_{\nomi}\alpha[\nomi/x]\land\gamma\to\neg@_{\nomi_0}\nomi_1$, which follows from the fact that $\vdash_{\phi\to\psi}@_{\nomi}\alpha[\nomi/x]\leftrightarrow @_{\nomi}\downarrow x.\alpha$. 

For the approximation rule from $y\leq\downarrow x.\alpha$ to $y\leq\alpha[y/x]$, the proof is similar.

\item For the approximation rule from $\downarrow x.\alpha\leq\neg\nomi$ to $\alpha[\nomi/x]\leq\neg\nomi$, it suffices to prove that from $\vdash_{\phi\to\psi}\neg @_{\nomi}\downarrow x.\alpha\land\gamma\to\neg@_{\nomi_0}\nomi_1$ one can get $\vdash_{\phi\to\psi}\neg @_{\nomi}\alpha[\nomi/x]\land\gamma\to\neg@_{\nomi_0}\nomi_1$, which follows from the fact that $\vdash_{\phi\to\psi}@_{\nomi}\alpha[\nomi/x]\leftrightarrow @_{\nomi}\downarrow x.\alpha$. 

For the approximation rule from $\downarrow x.\alpha\leq\neg y$ to $\alpha[y/x]\leq\neg y$, the proof is similar.

\item For the approximation rule from $\alpha\to\beta\leq\neg\nomi$ to $\nomj\leq\alpha$, $\beta\leq\neg\nomk$ and $\nomj\to\neg\nomk\leq\neg\nomi$, it suffices to prove that from $\vdash_{\phi\to\psi}\neg @_{\nomi}(\alpha\to\beta)\land\gamma\to\neg@_{\nomi_0}\nomi_1$ one can get $\vdash_{\phi\to\psi}@_{\nomj}\alpha\land\neg@_{\nomk}\beta\land\neg@_{\nomi}(\nomj\to\neg\nomk)\land\gamma\to\neg@_{\nomi_0}\nomi_1$, which follows from the fact that $\vdash_{\phi\to\psi}@_{\nomj}\alpha\land\neg@_{\nomk}\beta\land\neg@_{\nomi}(\nomj\to\neg\nomk)\to\neg @_{\nomi}(\alpha\to\beta)$.

For the approximation rule from $\alpha\to\beta\leq\neg x$ to $\nomj\leq\alpha$, $\beta\leq\neg\nomk$ and $\nomj\to\neg\nomk\leq\neg x$, the proof is similar. 
\item For the residuation rule from $\nomi\leq\neg\alpha$ to $\alpha\leq\neg\nomi$, it suffices to prove that from $\vdash_{\phi\to\psi}@_{\nomi}\neg\alpha\land\gamma\to\neg@_{\nomi_0}\nomi_1$ one can get $\vdash_{\phi\to\psi}\neg @_{\nomi}\alpha\land\gamma\to\neg@_{\nomi_0}\nomi_1$, which follows from the fact that $\vdash_{\phi\to\psi}\neg @_{\nomi}\alpha\leftrightarrow@_{\nomi}\neg\alpha$.

For the residuation rule from $x\leq\neg\alpha$ to $\alpha\leq\neg x$, the proof is similar.
\item For the residuation rule from $\neg\alpha\leq\neg\nomi$ to $\nomi\leq\alpha$, it suffices to prove that from $\vdash_{\phi\to\psi}\neg @_{\nomi}\neg\alpha\land\gamma\to\neg@_{\nomi_0}\nomi_1$ one can get $\vdash_{\phi\to\psi}@_{\nomi}\alpha\land\gamma\to\neg@_{\nomi_0}\nomi_1$, which follows from the fact that $\vdash_{\phi\to\psi}\neg @_{\nomi}\neg\alpha\leftrightarrow@_{\nomi}\alpha$.

For the residuation rule from $\neg\alpha\leq\neg x$ to $x\leq\alpha$, the proof is similar.
\item For the right-handed Ackermann rule from  
$\left\{ \begin{array}{ll}
\nomi_1\leq p \\
\vdots\\
\nomi_n\leq p \\
\nomj_1\leq\gamma_1\\
\vdots\\
\nomj_m\leq\gamma_m\\
\beta_1\leq\neg\nomk_1\\
\vdots\\
\beta_k\leq\neg\nomk_k\\
\end{array} \right.$ 
to
$\left\{ \begin{array}{ll}
\nomj_1\leq\gamma_1[(\nomi_1\lor\ldots\lor\nomi_n)/p]\\
\vdots\\
\nomj_m\leq\gamma_m[(\nomi_1\lor\ldots\lor\nomi_n)/p]\\
\beta_1[(\nomi_1\lor\ldots\lor\nomi_n)/p]\leq\neg\nomk_1\\
\vdots\\
\beta_k[(\nomi_1\lor\ldots\lor\nomi_n)/p]\leq\neg\nomk_k\\
\end{array} \right.$

without loss of generality we suppose that $m=k=1$, then it suffices to prove that from 
$$\vdash_{\phi\to\psi}@_{\nomi_1}p\land\ldots\land@_{\nomi_n}p\land@_{\nomj_1}\gamma_1\land\neg@_{\nomk_1}\beta_1\to\neg@_{\nomi_0}\nomi_1$$ one can get $$\vdash_{\phi\to\psi}@_{\nomj_1}\gamma_1[(\nomi_1\lor\ldots\lor\nomi_n)/p]\land\neg@_{\nomk_1}\beta_1[(\nomi_1\lor\ldots\lor\nomi_n)/p]\to\neg@_{\nomi_0}\nomi_1.$$ By uniform substitution $(\nomi_1\lor\ldots\lor\nomi_n)/p$ on $$\vdash_{\phi\to\psi}@_{\nomi_1}p\land\ldots\land@_{\nomi_n}p\land@_{\nomj_1}\gamma_1\land\neg@_{\nomk_1}\beta_1\to\neg@_{\nomi_0}\nomi_1,$$ we can get $$\vdash_{\phi\to\psi}@_{\nomi_1}(\nomi_1\lor\ldots\lor\nomi_n)\land\ldots\land@_{\nomi_n}(\nomi_1\lor\ldots\lor\nomi_n)\land@_{\nomj_1}\gamma_1[(\nomi_1\lor\ldots\lor\nomi_n)/p]\land\neg@_{\nomk_1}\beta_1[(\nomi_1\lor\ldots\lor\nomi_n)/p]\to\neg@_{\nomi_0}\nomi_1.$$ Since $\vdash_{\phi\to\psi}@_{\nomi_1}(\nomi_1\lor\ldots\lor\nomi_n)$, \ldots, $\vdash_{\phi\to\psi}@_{\nomi_n}(\nomi_1\lor\ldots\lor\nomi_n)$, we have that $$\vdash_{\phi\to\psi}@_{\nomj_1}\gamma_1[(\nomi_1\lor\ldots\lor\nomi_n)/p]\land\neg@_{\nomk_1}\beta_1[(\nomi_1\lor\ldots\lor\nomi_n)/p]\to\neg@_{\nomi_0}\nomi_1.$$
\item For the left-handed Ackermann rule from  
$\left\{ \begin{array}{ll}
p\leq\neg\nomi_1 \\
\vdots\\
p\leq\neg\nomi_n \\
\nomj_1\leq\gamma_1\\
\vdots\\
\nomj_m\leq\gamma_m\\
\beta_1\leq\neg\nomk_1\\
\vdots\\
\beta_k\leq\neg\nomk_k\\
\end{array} \right.$
to
$\left\{ \begin{array}{ll}
\nomj_1\leq\gamma_1[(\neg\nomi_1\land\ldots\land\neg\nomi_n)/p]\\
\vdots\\
\nomj_m\leq\gamma_m[(\neg\nomi_1\land\ldots\land\neg\nomi_n)/p]\\
\beta_1[(\neg\nomi_1\land\ldots\land\neg\nomi_n)/p]\leq\neg\nomk_1\\
\vdots\\
\beta_m[(\neg\nomi_1\land\ldots\land\neg\nomi_n)/p]\leq\neg\nomk_k\\
\end{array} \right.$

without loss of generality we suppose that $m=k=1$, then it suffices to prove that from $$\vdash_{\phi\to\psi}\neg@_{\nomi_1}p\land\ldots\land\neg@_{\nomi_n}p\land@_{\nomj_1}\gamma_1\land\neg@_{\nomk_1}\beta_1\to\neg@_{\nomi_0}\nomi_1$$ one can get $$\vdash_{\phi\to\psi}@_{\nomj_1}\gamma_1[(\neg\nomi_1\land\ldots\land\neg\nomi_n)/p]\land\neg@_{\nomk_1}\beta_1[(\neg\nomi_1\land\ldots\land\neg\nomi_n)/p]\to\neg@_{\nomi_0}\nomi_1.$$ By uniform substitution $(\neg\nomi_1\land\ldots\land\neg\nomi_n)/p$ on $$\vdash_{\phi\to\psi}\neg@_{\nomi_1}p\land\ldots\land\neg@_{\nomi_n}p\land@_{\nomj_1}\gamma_1\land\neg@_{\nomk_1}\beta_1\to\neg@_{\nomi_0}\nomi_1,$$ we can get $$\vdash_{\phi\to\psi}\neg@_{\nomi_1}(\neg\nomi_1\land\ldots\land\neg\nomi_n)\land\ldots\land\neg@_{\nomi_n}(\neg\nomi_1\land\ldots\land\neg\nomi_n)\land@_{\nomj_1}\gamma_1[(\neg\nomi_1\land\ldots\land\neg\nomi_n)/p]\land\neg@_{\nomk_1}\beta_1[(\neg\nomi_1\land\ldots\land\neg\nomi_n)/p]\to\neg@_{\nomi_0}\nomi_1.$$ Since $\vdash_{\phi\to\psi}\neg@_{\nomi_1}(\neg\nomi_1\land\ldots\land\neg\nomi_n)$, \ldots, $\vdash_{\phi\to\psi}\neg@_{\nomi_n}(\neg\nomi_1\land\ldots\land\neg\nomi_n)$, we have that $$\vdash_{\phi\to\psi}@_{\nomj_1}\gamma_1[(\neg\nomi_1\land\ldots\land\neg\nomi_n)/p]\land\neg@_{\nomk_1}\beta_1[(\neg\nomi_1\land\ldots\land\neg\nomi_n)/p]\to\neg@_{\nomi_0}\nomi_1.$$
\end{itemize}
\end{proof}

\begin{corollary}\label{Cor:Main}
Given a skeletal Sahlqvist formula $\phi\to\psi$, for each quasi-inequality $\mathsf{Quasi}$ in $\mathsf{Pure}(\phi\to\psi)$, we have that $\vdash_{\phi\to\psi}\mathsf{Tr}(\mathsf{Quasi})$, therefore $\vdash_{\phi\to\psi}\mathsf{Tr}(\mathsf{Pure}(\phi\to\psi))$.
\end{corollary}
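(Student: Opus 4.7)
The plan is to combine the two preceding lemmas; at this point the corollary is essentially a bookkeeping step, since nearly all the hard work has already been carried out in Stages~1 and~2. First, I would recall that by Lemma applied in Stage~1 (together with the subsequent first-approximation lemma), for each $\theta_i\leq\chi_i$ in the finite set obtained before first-approximation, the associated quasi-inequality $\mathsf{Quasi}_i:=\nomi_0\leq\theta_i\ \&\ \chi_i\leq\neg\nomi_1\ \Rightarrow\ \nomi_0\leq\neg\nomi_1$ satisfies $\vdash_{\phi\to\psi}\mathsf{Tr}(\mathsf{Quasi}_i)$. Then starting from this base, the previous lemma treating Stage~2 gives, by induction on the number of Stage~2 rule applications, that $\vdash_{\phi\to\psi}\mathsf{Tr}(\&\mathsf{Sys}\to \nomi_0\leq\neg\nomi_1)$ holds for every intermediate system $\mathsf{Sys}$; in particular it holds for the terminal system $\mathsf{Reduce}(\phi_i\leq\psi_i)$.

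Second, I would bridge the gap between the terminal Stage~2 quasi-inequality and the actual element of $\mathsf{Pure}(\phi\to\psi)$. Inspecting the description of Stage~3, the output quasi-inequality is obtained from $\&\mathsf{Reduce}(\phi_i\leq\psi_i)\Rightarrow \nomi_0\leq \neg\nomi_1$ by uniformly substituting all remaining free state variables by fresh nominals. Under the translation $\mathsf{Tr}$, this substitution commutes with the formation of $@$-formulas, so the result is exactly the image under the same substitution of $\mathsf{Tr}(\&\mathsf{Reduce}(\phi_i\leq\psi_i)\Rightarrow \nomi_0\leq \neg\nomi_1)$. I would then invoke the substitution rule (SB): since we only replace free state variables by nominals (and nominals are not bound by $\downarrow$), the substitution is safe in the technical sense required, so provability is preserved. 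This yields $\vdash_{\phi\to\psi}\mathsf{Tr}(\mathsf{Quasi})$ for each $\mathsf{Quasi}\in\mathsf{Pure}(\phi\to\psi)$.

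Third, to obtain the ``therefore'' clause I would simply form the finite conjunction over $i\in I$: by definition $\mathsf{Tr}(\mathsf{Pure}(\phi\to\psi))=\bigwedge_{\mathsf{Quasi}\in\mathsf{Pure}(\phi\to\psi)}\mathsf{Tr}(\mathsf{Quasi})$, and a finite conjunction of theorems is a theorem by classical propositional reasoning, so $\vdash_{\phi\to\psi}\mathsf{Tr}(\mathsf{Pure}(\phi\to\psi))$ follows.

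There is essentially no genuine obstacle here; the only point requiring a moment of care is verifying that the final state-variable-to-nominal renaming in Stage~3 is a safe substitution in the sense of the footnote accompanying (SB) so that the rule applies. Since nominals are rigid and never fall under a $\downarrow$-binder in the sense that would cause capture of a bound state variable, this verification is immediate from the syntactic shape of the quasi-inequalities produced by $\mathsf{ALBA}^{\downarrow}_{\mathsf{Restricted}}$.
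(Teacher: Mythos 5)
Your proposal is correct and follows essentially the same route as the paper: the paper's own proof is a one-line observation that, given the preceding Stage~1 and Stage~2 lemmas, the only remaining point is that the final uniform replacement of free state variables by fresh nominals preserves provability, which is exactly the (SB)-based step you spell out. Your additional detail on combining the lemmas and forming the final conjunction is just an explicit version of what the paper leaves implicit.
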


\begin{proof}
It suffices to see that for each pure quasi-inequality produced after Stage 2, by uniformly substitute free occurrences of state variables by fresh nominals, the translation of the resulting pure quasi-inequality is still provable in $\mathbf{K}_{\mathcal{H}(@,\downarrow)}+(\phi\to\psi)$.
\end{proof}

\subsection{Main Proof}

Now we are ready to prove our main result:

\begin{theorem}
For any skeletal Sahlqvist formula $\phi\to\psi$, $\mathbf{K}_{\mathcal{H}(@,\downarrow)}+(\phi\to\psi)$ is sound and strongly complete with respect to the class of Kripke frames $\mathcal{F}$ defined by $\phi\to\psi$.
\end{theorem}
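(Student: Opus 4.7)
The plan is to combine the machinery already built in Sections \ref{Sec:ALBA}--\ref{Sec:Completeness} with the strong completeness of pure extensions (Theorem \ref{Completeness:Pure}) to reduce the skeletal Sahlqvist case to the pure axiom case. Concretely, I would set $\Pi := \mathsf{Tr}(\mathsf{Pure}(\phi\to\psi))$ and show that the two Hilbert systems $\mathbf{K}_{\mathcal{H}(@,\downarrow)}+(\phi\to\psi)$ and $\mathbf{K}_{\mathcal{H}(@,\downarrow)}+\Pi$ define the same class of frames $\mathcal{F}$ and prove the same theorems, then invoke Theorem \ref{Completeness:Pure} on the pure axiomatisation.

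First I would verify that $\Pi$ is a set of pure $\mathcal{L}(@,\downarrow)$-sentences: the output stage of $\mathsf{ALBA}^{\downarrow}_{\mathsf{Restricted}}$ already eliminates propositional variables and uniformly substitutes free state variables by fresh nominals, and the translation $\mathsf{Tr}$ only wraps inequalities of the specific approved shape (with a nominal/state variable on the appropriate side) by an $@$-operator, as highlighted in the Remark after Theorem \ref{Thm:Soundness}. Thus Theorem \ref{Completeness:Pure} applies to $\mathbf{K}_{\mathcal{H}(@,\downarrow)}+\Pi$, giving its soundness and strong completeness with respect to the class of frames defined by $\Pi$.

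Next I would identify that frame class with $\mathcal{F}$. Theorem \ref{Thm:Success} ensures that $\mathsf{ALBA}^{\downarrow}_{\mathsf{Restricted}}$ succeeds on $\phi\to\psi$; Theorem \ref{Thm:Soundness} then gives $\mathbb{F}\Vdash\phi\to\psi$ iff $\mathbb{F}\models\mathsf{Pure}(\phi\to\psi)$; and Proposition \ref{Prop:Translation} upgrades this equivalence to $\mathbb{F}\models\mathsf{Pure}(\phi\to\psi)$ iff $\mathbb{F}\Vdash\Pi$. Thus $\Pi$ and $\phi\to\psi$ define exactly the same class $\mathcal{F}$, so $\mathbf{K}_{\mathcal{H}(@,\downarrow)}+\Pi$ is sound and strongly complete with respect to $\mathcal{F}$.

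Finally I would transfer this to $\mathbf{K}_{\mathcal{H}(@,\downarrow)}+(\phi\to\psi)$ using Corollary \ref{Cor:Main}, which gives $\vdash_{\phi\to\psi}\mathsf{Tr}(\mathsf{Quasi})$ for every $\mathsf{Quasi}\in\mathsf{Pure}(\phi\to\psi)$, i.e.\ $\mathbf{K}_{\mathcal{H}(@,\downarrow)}+(\phi\to\psi)\vdash\Pi$. Soundness of $\mathbf{K}_{\mathcal{H}(@,\downarrow)}+(\phi\to\psi)$ with respect to $\mathcal{F}$ is immediate from the definition of $\mathcal{F}$ together with soundness of $\mathbf{K}_{\mathcal{H}(@,\downarrow)}$ itself. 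For strong completeness, given $\Gamma\Vdash_{\mathcal{F}}\chi$, the strong completeness of $\mathbf{K}_{\mathcal{H}(@,\downarrow)}+\Pi$ yields $\Gamma\vdash_{\Pi}\chi$, which via the deduction-style definition of $\vdash_{\Pi}$ and Corollary \ref{Cor:Main} upgrades to $\Gamma\vdash_{\phi\to\psi}\chi$, since every axiom of $\Pi$ used in the derivation is itself a theorem of $\mathbf{K}_{\mathcal{H}(@,\downarrow)}+(\phi\to\psi)$. I do not anticipate any substantive obstacle: the entire argument is a chaining of the algorithm's semantic soundness, the proof-theoretic tracking of Corollary \ref{Cor:Main}, and ten Cate's completeness theorem for pure extensions; the only point requiring care is the bookkeeping that $\Pi$ genuinely lies in the scope of Theorem \ref{Completeness:Pure} (purity and sentence-hood).
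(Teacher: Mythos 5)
Your proposal is correct and follows essentially the same route as the paper's own proof: identify the frame class of $\phi\to\psi$ with that of $\mathsf{Tr}(\mathsf{Pure}(\phi\to\psi))$ via Theorems \ref{Thm:Soundness} and Proposition \ref{Prop:Translation}, invoke Theorem \ref{Completeness:Pure} on the pure axiomatisation, and transfer derivability back to $\mathbf{K}_{\mathcal{H}(@,\downarrow)}+(\phi\to\psi)$ via Corollary \ref{Cor:Main}. Your explicit check that the output of the algorithm is a set of pure \emph{sentences} (so that Theorem \ref{Completeness:Pure} genuinely applies) is a point the paper leaves implicit, but it does not change the argument.
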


\begin{proof}
Our proof strategy is as follows: we prove that for any $\mathcal{L}(@,\downarrow)$-formula set $\Gamma$ and any $\mathcal{L}(@,\downarrow)$-formula $\gamma$, 
$$\Gamma\vdash_{\phi\to\psi}\gamma\ \Rightarrow\ \Gamma\Vdash_{\mathcal{F}}\gamma\ \Rightarrow\ \Gamma\vdash_{\mathsf{Tr}(\mathsf{Pure}(\phi\to\psi))}\gamma\ \Rightarrow\ \Gamma\vdash_{\phi\to\psi}\gamma.$$

\begin{itemize}
\item For the first implication, i.e.\ the soundness part, it is easy. 
\item For the second implication, from the fact that 
\begin{center}
\begin{tabular}{l l l}
& $\mathbb{F}\Vdash\phi\to\psi$ &\\
iff & $\mathbb{F}\Vdash\mathsf{Pure}(\phi\to\psi)$ & (Theorem \ref{Thm:Soundness})\\
iff & $\mathbb{F}\Vdash\mathsf{Tr}(\mathsf{Pure}(\phi\to\psi))$ & (corollary of Proposition \ref{Prop:Translation})\\
\end{tabular}
\end{center}
we have that $\mathcal{F}$ is also defined by $\mathsf{Tr}(\mathsf{Pure}(\phi\to\psi))$. By Theorem \ref{Completeness:Pure}, we have the completeness of $\mathbf{K}_{\mathcal{H}(@,\downarrow)}+\mathsf{Tr}(\mathsf{Pure}(\phi\to\psi))$ with respect to $\mathcal{F}$.
\item For the third implication, it suffices to show that all theorems of $\mathbf{K}_{\mathcal{H}(@,\downarrow)}+\mathsf{Tr}(\mathsf{Pure}(\phi\to\psi))$ are also theorems of $\mathbf{K}_{\mathcal{H}(@,\downarrow)}+(\phi\to\psi)$. To show this, it is enough to prove that $\vdash_{\phi\to\psi}\mathsf{Tr}(\mathsf{Pure}(\phi\to\psi))$, which follows from Corollary \ref{Cor:Main}.
\end{itemize}
\end{proof}
By an easy adaptation of the previous results to a set $\Sigma$ of skeletal Sahlqvist formulas, we have the following corollary:
\begin{corollary}
For any set $\Sigma$ of skeletal Sahlqvist formulas, $\mathbf{K}_{\mathcal{H}(@,\downarrow)}+\Sigma$ is sound and strongly complete with respect to the class of Kripke frames $\mathcal{F}$ defined by $\Sigma$.
\end{corollary}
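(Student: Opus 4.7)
The plan is to lift the single-formula argument of the preceding theorem to an arbitrary set $\Sigma$ of skeletal Sahlqvist formulas by working with the union of the pure correspondents and their translations. Concretely, for each $\phi\to\psi\in\Sigma$, I would run $\mathsf{ALBA}^{\downarrow}_{\mathsf{Restricted}}$ to obtain $\mathsf{Pure}(\phi\to\psi)$, and then apply $\mathsf{Tr}$ to obtain the set $\mathsf{Tr}(\mathsf{Pure}(\phi\to\psi))$ of $\mathcal{L}(@,\downarrow)$-sentences. Taking
\[
\Pi \; := \; \bigcup_{\phi\to\psi\,\in\,\Sigma}\mathsf{Tr}(\mathsf{Pure}(\phi\to\psi)),
\]
we then establish the same three-step chain as in the theorem, uniformly over all $\gamma\in\Sigma$.

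The first step is to verify that $\Pi$ defines the same frame class $\mathcal{F}$ as $\Sigma$. For a single $\phi\to\psi$, Theorem \ref{Thm:Soundness} gives $\mathbb{F}\Vdash\phi\to\psi$ iff $\mathbb{F}\models\mathsf{Pure}(\phi\to\psi)$, and (the corollary of) Proposition \ref{Prop:Translation} gives $\mathbb{F}\models\mathsf{Pure}(\phi\to\psi)$ iff $\mathbb{F}\Vdash\mathsf{Tr}(\mathsf{Pure}(\phi\to\psi))$. Intersecting over $\Sigma$ yields that $\mathcal{F}$ is defined by $\Pi$. Since every member of $\Pi$ is a pure $\mathcal{L}(@,\downarrow)$-sentence, Theorem \ref{Completeness:Pure} then gives that $\mathbf{K}_{\mathcal{H}(@,\downarrow)}+\Pi$ is sound and strongly complete with respect to $\mathcal{F}$.

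The second step is to show $\vdash_{\Sigma}\pi$ for every $\pi\in\Pi$. For any $\phi\to\psi\in\Sigma$, Corollary \ref{Cor:Main} already gives $\vdash_{\phi\to\psi}\mathsf{Tr}(\mathsf{Pure}(\phi\to\psi))$; since $\phi\to\psi\in\Sigma$, the same proof goes through in $\mathbf{K}_{\mathcal{H}(@,\downarrow)}+\Sigma$, i.e.\ $\vdash_{\Sigma}\pi$ for each $\pi\in\Pi$. Consequently every theorem of $\mathbf{K}_{\mathcal{H}(@,\downarrow)}+\Pi$ is a theorem of $\mathbf{K}_{\mathcal{H}(@,\downarrow)}+\Sigma$, so strong completeness transfers: for $\Gamma\cup\{\gamma\}\subseteq\mathcal{L}(@,\downarrow)$,
\[
\Gamma\vdash_{\Sigma}\gamma\ \Rightarrow\ \Gamma\Vdash_{\mathcal{F}}\gamma\ \Rightarrow\ \Gamma\vdash_{\Pi}\gamma\ \Rightarrow\ \Gamma\vdash_{\Sigma}\gamma.
\]

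The only delicate point, which I would call out explicitly but not expect to be a real obstacle, is bookkeeping of the fresh nominals and state variables introduced by first approximation and by the approximation rules in different runs of $\mathsf{ALBA}^{\downarrow}_{\mathsf{Restricted}}$ on different members of $\Sigma$. These are handled exactly as in the single-formula case: for each $\phi\to\psi\in\Sigma$, choose the fresh nominals disjoint from those used for any other member of $\Sigma$ (possible since $\mathsf{Nom}$ is countably infinite), and note that the final step of the algorithm substitutes free state variables by fresh nominals, so the resulting sentences in $\Pi$ can be taken to share no nominals across distinct members of $\Sigma$. With this choice, the provability argument inside $\mathbf{K}_{\mathcal{H}(@,\downarrow)}+\Sigma$ proceeds independently for each $\pi\in\Pi$, and the corollary follows.
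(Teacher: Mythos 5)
Your proposal is correct and is essentially the adaptation the paper has in mind: the paper derives this corollary from the single-formula theorem with only the remark that it follows ``by an easy adaptation,'' and your argument (unioning the translated pure correspondents, checking they define the same frame class via Theorem \ref{Thm:Soundness} and Proposition \ref{Prop:Translation}, invoking Theorem \ref{Completeness:Pure}, and transferring provability via Corollary \ref{Cor:Main}) is precisely that adaptation spelled out. The explicit handling of fresh-nominal bookkeeping across different runs of the algorithm is a welcome addition that the paper leaves implicit.
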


\section{Conclusion}\label{Sec:Conclusion}

In the present paper, we investigates the completeness theory for hybrid logic with binder $\mathcal{L}(@,\downarrow)$. We define the class of skeletal Sahlqvist formulas, and show that for any set $\Sigma$ of skeletal Sahlqvist formulas, $\mathbf{K}_{\mathcal{H}(@,\downarrow)}+\Sigma$ is sound and strongly complete with respect to the class of Kripke frames $\mathcal{F}$ defined by $\Sigma$. Our strategy is to use the algorithm $\mathsf{ALBA}^{\downarrow}_{\mathsf{Restricted}}$ to transform an input skeletal Sahlqvist formula $\phi\to\psi$ into an equivalent $\mathcal{L}(@,\downarrow)$-formula $\mathsf{Tr}(\mathsf{Pure}(\phi\to\psi))$, and then show that $\mathbf{K}_{\mathcal{H}(@,\downarrow)}+(\phi\to\psi)$ proves $\mathsf{Tr}(\mathsf{Pure}(\phi\to\psi))$.

Our methodology could also work for $\mathcal{L}(@)$, which follows from a restricted version of the algorithm $\mathsf{hybrid}$-$\mathsf{ALBA}$ defined in \cite{ConRob}. Indeed, we got inspiration of the definition of skeletal Sahlqivst inequalities from \cite{ConRob}. In \cite{ConRob}, Conradie and Robinson gave an algebraic proof of the completeness of $\mathbf{K}_{\mathcal{H}(@)}+\Sigma$ where $\Sigma$ is a set of skeletal formulas. Our proof can be seen as a proof-theoretic counterpart of their proof.

For future directions, we list the following:

\begin{itemize}
\item In \cite{tCMaVi06}, ten Cate, Marx and Viana proved that modal Sahlqvist formulas that do not contain occurrences of nominals axiomatize complete logics extending $\mathbf{K}_{\mathcal{H}(@)}$. A future question is whether this result could be extended to the language $\mathcal{L}(@,\downarrow)$.
\item In \cite{ConRob}, Conradie and Robinson proved that for any set $\Sigma$ of nominally skeletal inductive formulas, the logic $\mathbf{K}_{\mathcal{H}(@)}+\Sigma$ is sound and strongly complete with respect to its class of Kripke frames. A future question is that whether this result could be extended to the language $\mathcal{L}(@,\downarrow)$.
\item In \cite{Li06}, Litak gave an algebraization of hybrid logic with binder $\mathcal{H}(@,\downarrow)$. A future question is whether we can use this algebraization to give canonicity proofs of certain formulas to prove completeness results.
\end{itemize}

\paragraph{Acknowledgement} The research of the author is supported by the Taishan Young Scholars Program of the Government of Shandong Province, China (No.tsqn201909151).

\bibliographystyle{abbrv}
\bibliography{Binder}

\begin{thebibliography}{10}

\bibitem{BetCMaVi04}
N.~Bezhanishvili, B.~ten Cate, M.~Marx, and P.~Viana.
\newblock Sahlqvist theory and transfer results for hybrid logics.
\newblock In R.~Schmidt, I.~Pratt-Hartmann, M.~Reynolds, and H.~Wansing,
  editors, {\em Preliminary proceedings of Advances in Modal Logic 2004}, 2004.

\bibitem{tCBl06}
P.~Blackburn and B.~t. Cate.
\newblock Pure extensions, proof rules, and hybrid axiomatics.
\newblock {\em Studia Logica}, 84(2):277--322, 2006.

\bibitem{BeBlWo06}
P.~Blackburn, J.~F. van Benthem, and F.~Wolter.
\newblock {\em Handbook of modal logic}, volume~3.
\newblock Elsevier, 2006.

\bibitem{Co09}
W.~Conradie.
\newblock {Completeness and Correspondence in Hybrid Logic via an Extension of
  SQEMA}.
\newblock {\em Electronic Notes in Theoretical Computer Science}, 231:175 --
  190, 2009.
\newblock Proceedings of the 5th Workshop on Methods for Modalities (M4M5
  2007).

\bibitem{CoGhPa14}
W.~Conradie, S.~Ghilardi, and A.~Palmigiano.
\newblock Unified correspondence.
\newblock In A.~Baltag and S.~Smets, editors, {\em Johan van Benthem on Logic
  and Information Dynamics}, volume~5 of {\em Outstanding Contributions to
  Logic}, pages 933--975. Springer International Publishing, 2014.

\bibitem{CoGoVa06b}
W.~Conradie, V.~Goranko, and D.~Vakarelov.
\newblock {Algorithmic Correspondence and Completeness in Modal Logic. II.
  Polyadic and Hybrid Extensions of the Algorithm SQEMA}.
\newblock {\em Journal of Logic and Computation}, 16(5):579--612, 09 2006.

\bibitem{CoPa12}
W.~Conradie and A.~Palmigiano.
\newblock Algorithmic correspondence and canonicity for distributive modal
  logic.
\newblock {\em Annals of Pure and Applied Logic}, 163(3):338 -- 376, 2012.

\bibitem{ConPalSou}
W.~Conradie, A.~Palmigiano, and S.~Sourabh.
\newblock Algebraic modal correspondence: {S}ahlqvist and beyond.
\newblock {\em Journal of Logical and Algebraic Methods in Programming},
  91:60--84, 2017.

\bibitem{CPZ:Trans}
W.~Conradie, A.~Palmigiano, and Z.~Zhao.
\newblock {Sahlqvist via Translation}.
\newblock {\em {Logical Methods in Computer Science}}, {Volume 15, Issue 1},
  Feb. 2019.

\bibitem{ConRob}
W.~Conradie and C.~Robinson.
\newblock On {S}ahlqvist theory for hybrid logic.
\newblock {\em Journal of Logic and Computation}, 27(3):867--900, 2017.

\bibitem{GaGo93}
G.~Gargov and V.~Goranko.
\newblock Modal logic with names.
\newblock {\em Journal of Philosophical Logic}, 22(6):607--636, 1993.

\bibitem{GoVa01}
V.~Goranko and D.~Vakarelov.
\newblock {Sahlqvist Formulas in Hybrid Polyadic Modal Logics}.
\newblock {\em Journal of Logic and Computation}, 11(5):737--754, 10 2001.

\bibitem{Ho06}
I.~Hodkinson.
\newblock Hybrid formulas and elementarily generated modal logics.
\newblock {\em Notre Dame J. Formal Logic}, 47(4):443--478, 10 2006.

\bibitem{HoPa10}
I.~Hodkinson and L.~Paternault.
\newblock Axiomatizing hybrid logic using modal logic.
\newblock {\em Journal of Applied Logic}, 8(4):386 -- 396, 2010.
\newblock Special Issue on Hybrid Logics.

\bibitem{Li06}
T.~Litak.
\newblock Algebraization of hybrid logic with binders.
\newblock In R.~A. Schmidt, editor, {\em Relations and Kleene Algebra in
  Computer Science}, pages 281--295, Berlin, Heidelberg, 2006. Springer Berlin
  Heidelberg.

\bibitem{PaSoZh16}
A.~Palmigiano, S.~Sourabh, and Z.~Zhao.
\newblock Sahlqvist theory for impossible worlds.
\newblock {\em Journal of Logic and Computation}, 27(3):775--816, 2017.

\bibitem{Sa75}
H.~Sahlqvist.
\newblock Completeness and correspondence in the first and second order
  semantics for modal logic.
\newblock In {\em Studies in Logic and the Foundations of Mathematics},
  volume~82, pages 110--143. 1975.

\bibitem{Ta05}
K.~Tamura.
\newblock Hybrid logic with pure and {S}ahlqvist axioms.
\newblock {\em http://www.st.nanzan-u.ac.jp/info/sasaki/2005mlg/43-45.pdf.}

\bibitem{tC05}
B.~ten Cate.
\newblock {\em Model theory for extended modal languages}.
\newblock PhD thesis, University of Amsterdam, 2005.

\bibitem{tCMaVi06}
B.~ten Cate, M.~Marx, and J.~P. Viana.
\newblock Hybrid logics with {S}ahlqvist axioms.
\newblock {\em Logic Journal of the IGPL}, (3):293--300, 2006.

\bibitem{vB83}
J.~van Benthem.
\newblock {\em Modal logic and classical logic}.
\newblock Bibliopolis, 1983.

\bibitem{Zh21c}
Z.~Zhao.
\newblock Algorithmic correspondence for hybrid logic with binder.
\newblock {\em Logic Journal of the IGPL}, 09 2021.
\newblock jzab029.

\bibitem{Zh21}
Z.~Zhao.
\newblock Sahlqvist correspondence theory for sabotage modal logic.
\newblock {\em To appear in Studies in Logic}, 2022.

\end{thebibliography}

\end{document}